\documentclass[10pt, a4paper, onecolumn, conference, compsocconf]{IEEEtran}


\usepackage{graphicx}
\usepackage{url}
\usepackage{times}
\usepackage{color}
\usepackage{amssymb,amsmath}
\usepackage{threeparttable}
\usepackage[ruled,vlined]{algorithm2e}

\newcommand{\Attr}{\ensuremath{\textrm{Attr}}}

\newcommand{\sys}{\mathcal{S}}
\newcommand{\pri}{\mathcal{P}}

\newcommand{\localGame}[1]{G}

\newcommand{\Reach}{\mathsf{reach}\xspace}

\newcommand{\true}{\texttt{True}}
\newcommand{\false}{\texttt{False}}

\newcommand\bool{\mathbf{B}}
\newcommand\boolform{{\cal B}}
\newcommand\eval{{\cal E}}
\newcommand{\initloc}[1]{l^0_{#1}}
\newcommand{\initeval}[1]{e^0_{#1}}
\newcommand{\initconf}{c^0}

\newcommand{\system}{\mathcal{S}}
\newcommand{\lang}{\mathcal{L}(\system)\xspace}
\newcommand{\reach}{\mathcal{R}_{\system}\xspace}
\newcommand{\conf}{\mathcal{C}_{\system}\xspace}

\newtheorem{defi}{Definition}
\newtheorem{theo}{Theorem}

\newtheorem{lemma}{Lemma}

\begin{document}
%
\title{Algorithms for Synthesizing Priorities in Component-based Systems}


\author{
\IEEEauthorblockN{Chih-Hong Cheng\authorrefmark{1},
                    Saddek Bensalem\authorrefmark{2},
                    Yu-Fang Chen\authorrefmark{3},
                    Rongjie Yan\authorrefmark{4},}
\IEEEauthorblockN{Barbara Jobstmann\authorrefmark{2},
                    Harald Ruess\authorrefmark{5},
                    Christian Buckl\authorrefmark{5},
                    Alois Knoll\authorrefmark{1}}
                    \\
\IEEEauthorblockA{\authorrefmark{1}Department of Informatics, Technische Universit\"{a}t M\"{u}nchen, Germany}
\IEEEauthorblockA{\authorrefmark{2}Verimag Laboratory, Grenoble, France}
\IEEEauthorblockA{\authorrefmark{3}Institute of Information Science, Academia Sinica, Taipei, Taiwan}
\IEEEauthorblockA{\authorrefmark{4}State Key Laboratory of Computer Science, ISCAS, Beijing, China}
\IEEEauthorblockA{\authorrefmark{5}fortiss GmbH, Germany}\\

\textbf{\url{http://www6.in.tum.de/~chengch/vissbip}}
}


\maketitle

\begin{abstract}
We present algorithms to synthesize component-based systems
that are safe and deadlock-free using priorities, which define stateless-precedence
between enabled actions. Our core method combines the concept of fault-localization (using safety-game) and fault-repair (using SAT for conflict resolution). For complex systems, we propose three complementary methods as preprocessing steps for priority synthesis, namely (a) data abstraction to reduce component complexities, (b) alphabet abstraction and $\sharp$-deadlock to ignore components, and (c) automated assumption learning for compositional priority synthesis.

\end{abstract}



%
\IEEEpeerreviewmaketitle

\section{Introduction\label{sec.algo.prioritysyn.introduction}}

Priorities~\cite{goessler2003priority} define \emph{stateless-precedence relations between actions}
available in component-based systems. They can be used to restrict the behavior of a system in order to avoid undesired states.  They are particularly useful to avoid deadlock states (i.e., states in which all actions are disabled), because they do not introduce new deadlock states and therefore avoid creating new undesired states.  Furthermore, due to their stateless property and the fact that they operate on the interface of a component, they are relatively easy to implement in a distributed setting~\cite{GrafPQ10,Bonakdarpour2011distribute}.
In a tool paper~\cite{cheng:vissbip:2011}, we presented the tool \textsc{VissBIP}\footnote{Shortcut for \underline{\textbf{Vi}}sualization and \underline{\textbf{s}}ynthesis for \underline{\textbf{s}}imple \underline{\textbf{BIP}} systems.} together with a concept called \emph{priority synthesis}, which aims to automatically generate a set of priorities such that the system constrained by the synthesized priorities satisfies a given \emph{safety property} or \emph{deadlock freedom}. In this paper, we explain the underlying algorithm and propose extensions for more complex systems. 

Priority synthesis is expensive; we showed in~\cite{cheng:hardness:2011} that synthesizing priorities for safety properties (or deadlock-freedom) is NP-complete in the size of the state space of the product graph.
%
Therefore, we present an incomplete search framework for priority synthesis, which mimics the process of \emph{fault-localization} and \emph{fault-repair} (Section~\ref{sec.algo.prioritysyn.repair}).
  Intuitively, a state is a fault location if it is the latest point from which there is a way to avoid a failure, i.e., there exists (i) an outgoing action that leads to an \emph{attracted state}, a state from which all paths unavoidably reach a bad state, and (ii) there exists an alternative action that avoids entering any of the attracted states.
%
We compute fault locations using the algorithm for \emph{safety games}.
Given a set of fault locations, priority synthesis is achieved via fault-repair: an algorithm resolves potential conflicts in priorities generated via fault-localization and finds a satisfying subset of priorities as a solution for synthesis.
Our symbolic encodings on the system, together with the new variable ordering heuristic and other optimizations, helps to solve problems much more efficiently compared to our preliminary implementation in~\cite{cheng:vissbip:2011}.
Furthermore, it allows us to integrate an adversary environment model
similar to the setting in Ramadge and Wonham's controller synthesis framework~\cite{ramadge1989control}.

Abstraction or compositional techniques are widely used in verification of infinite state or complex systems for safety properties
but \emph{not all} techniques ensure that synthesizing an abstract system for
deadlock-freeness guarantees deadlock-freeness in the concrete system (Section~\ref{sec.algo.prioritysyn.complexities}).
Therefore, it is important to find appropriate techniques to assist synthesis on complex problems.
We first revisit \emph{data abstraction} (Section~\ref{subsec.algo.prioritysyn.dp.abstraction}) for data domain such that priority synthesis works on an abstract system composed by components abstracted component-wise~\cite{bensalem2008compositional}.
Second, we present a technique called \emph{alphabet-abstraction} (Section~\ref{subsec.algo.prioritysyn.alphabet}), handling complexities induced by the composition of components.
Lastly, for behavioral-safety properties (not applicable for deadlock-avoidance),
we utilize automata-learning~\cite{angluin1987learning} to achieve \emph{compositional priority synthesis} (Section~\ref{sec.algo.prioritysyn.assume.guarantee}).

We implemented the presented algorithms (except connection with the data abstraction module in D-Finder~\cite{bensalem:dfinder2:2011}) in the \textsc{VissBIP} tool and performed experiments to evaluate them (Section~\ref{sec.algo.prioritysyn.evaluation}).  Our examples show that the process using fault-localization and fault-repair generates priorities that are highly desirable.  Alphabet abstraction enables us to scale to arbitrary large problems.  We also present a model for distributed communication.  In this example, the priorities synthesized by our engine are completely local (i.e., each priority involves two local actions within a component). Therefore, they can be translated directly to distributed control.  We summarize related work and conclude with an algorithmic flow in Section~\ref{sec.algo.prioritysyn.related} and~\ref{sec.algo.prioritysyn.conclusion}.

\section{Component-based Modeling and Priority Synthesis\label{sec.algo.prioritysyn.bip}}

\subsection{Behavioral-Interaction-Priority Framework}

The Behavior-Interaction-Priority (BIP) framework\footnote{http://www-verimag.imag.fr/Rigorous-Design-of-Component-Based.html?lang=en} provides a rigorous component-based design flow for heterogeneous systems.  Rigorous design refers to the strict separation of three different layers (behaviors, interactions, and priorities) used to describe a system.
A detailed description of the BIP language can be found in~\cite{basu2006modeling}.
To simplify the explanations, we focus on \emph{simple} systems, i.e., systems without hierarchies and finite data types.
Intuitively, a simple BIP system consists of a set of automata (extended with data) that synchronize on joint labels.

\begin{defi}[BIP System]
We define a (simple BIP) system as a tuple $\mathcal{S} = (C , \Sigma, \mathcal{P})$, where
\begin{itemize}
 \item $\Sigma$ is a finite set of \textbf{events} or interaction labels, called \textbf{interaction alphabet},
 \item $C = \bigcup_{i=1}^m C_i$ is a finite set of \textbf{components}. Each component $C_i$ is
 a transition system extended with data. Formally, $C_i$ is a tuple
$(L_i, V_i, \Sigma_i, T_i, \initloc{i}, \initeval{i})$:
 \begin{itemize}
    \item $L_i=\{l_{i_1},\ldots,l_{i_n}\}$ is a finite set of \emph{control locations}.
    \item $V_i = \{v_{i_1},\ldots,v_{i_p}\}$ is a finite set of \emph{(local) variables} with a finite domain.
    Wlog we assume that the domain is the Boolean domain $\bool = \{\true, \false\}$.
    We use $|V_i|$ to denote the number of variables used in~$C_i$.  An \emph{evaluation (or assignment)} of the variables in $V_i$ is a functions $e: V_i\to\bool$ mapping every variable to a value in the domain.  We use $\eval(V_i)$ to denote the set of all evaluations over the variables $V_i$.  Given a Boolean formula $f\in \boolform(V_i) $ over the variables in $V_i$ and an evaluation $e\in \eval(V_i)$, we use $f(e)$ to refer to the truth value of $f$ under the evaluation $e$.
    \item $\Sigma_i \subseteq \Sigma$ is a subset of interaction labels used in $C_i$.
    \item $T_i$ is the set of \emph{transitions}. A transition $t_i\in T_i$ is of the form $(l,g,\sigma,f,l')$, where $l, l'\in L_i$ are the \emph{source and destination location}, $g \in \boolform(V_i)$ is called the \emph{guard} and is a Boolean formula over the variables $V_i$.
$\sigma \in \Sigma_i$ is an interaction label (specifying the event triggering the transition), and $f: V_i \to \boolform(V_i)$ is the \emph{update function} mapping every variable to a Boolean formula
encoding the change of its value.
  \item $\initloc{i} \in L_i$ is the \emph{initial location} and $\initeval{i} \in \eval(V_i)$ is the initial evaluation of the variables.
 \end{itemize}
 \item $\mathcal{P}$ is a finite set of interaction pairs (called \textbf{priorities}) defining a relation $\prec \; \subseteq\Sigma\times\Sigma$ between the interaction labels. We require that $\prec$ is (1)
    transitive and (2) non-reflexive (i.e., there are no circular
    dependencies)~\cite{goessler2003priority}. For $(\sigma_1, \sigma_2) \in \mathcal{P}$, we sometimes write $\sigma_1 \prec \sigma_2$ to highlight the property of priority.
\end{itemize}
\end{defi}

\begin{defi}[Configuration]
Given a system $\mathcal{S}$, a \emph{configuration (or state)}~$c$ is a tuple $(l_1, e_1, \ldots, l_m, e_m)$ with $l_i \in L_i$ and $e_i \in \eval(V_i)$ for all $i \in \{1,\ldots,m\}$. We use $\conf$ to denote the set of all reachable configurations.
The configuration $(\initloc{1}, \initeval{1}, \ldots, \initloc{m}, \initeval{m})$ is called the \emph{initial configuration} of $\mathcal{S}$ and
is denoted by $\initconf$. 
\end{defi}

\newcommand{\inter}{\bar{\sigma}}

\begin{defi}[Enabled Interactions]\label{def:semantics}
Given a system $\mathcal{S}$ and a configuration $c=(l_1, e_1, \ldots, l_m, e_m)$,
we say an interaction $\sigma \in \Sigma$ is \textbf{enabled (in $c$)}, if the following conditions hold:
\begin{enumerate}
    \item (Joint participation) $\forall i \in \{1,\ldots, m\}$, if $\sigma \in \Sigma_i$, then $\exists g_i, f_i, l_i'$ such that $(l_i,g_i,\sigma,f_i,l_i') \in T_i$ and  $g_i(e_i) = \true$.
    \item (No higher priorities enabled) For all other interaction $\inter \in \Sigma$ satisfying joint participation (i.e., $\forall i \in \{1,\ldots, m\}$, if $\inter \in \Sigma_i$, then $\exists (l_i,\bar{g}_i,\inter,\bar{f}_i,\bar{l}_i') \in T_i$ such that $\bar{g}_i(e_i) = \true$), $(\sigma, \inter) \not\in \mathcal{P}$ holds.

\end{enumerate}
\end{defi}

\begin{defi}[Behavior]
  Given a system $\mathcal{S}$, two configurations $c=(l_1, e_1, \ldots,$ $l_m, e_m)$, $c'=(l'_1, e'_1, \ldots, l'_m, e'_m)$, and an interaction $\sigma\in\Sigma$ enabled in $c$, we say \emph{$c'$ is a $\sigma$-successor (configuration) of $c$}, denoted $c \xrightarrow[]{\sigma} c'$, if the following two conditions hold for all components $C_i =(L_i, V_i, \Sigma_i, T_i, \initloc{i}, \initeval{i})$:
\begin{itemize}
\item (Update for participated components) If $\sigma \in \Sigma_i$, then there exists a transition $(l_i,g_i,\sigma,f_i,l_i') \in T_i$ such that $g_i(e_i) = \true$ and for all variables $v\in V_i$, $e_i' = f_i(v)(e_i)$.
 \item (Stutter for idle components) Otherwise, $l'_i = l_i$ and $e'_i = e_i$.
\end{itemize}

Given two configurations $c$ and $c'$, we say \emph{$c'$ is reachable from $c$} with the interaction sequence $w=\sigma_1 \ldots \sigma_k$ , denoted $c \xrightarrow[]{w} c'$, if there exist configurations $c_0,\dots,c_{k}$ such that (i) $c_0 = c$,
(ii) $c_{k} = c'$, and (iii) for all  $i: 0\le i < k$, $c_i \xrightarrow[]{\sigma_{i+1}} c_{i+1}$.
We denote the set of all configuration of $\mathcal{S}$ reachable from the initial configuration $c^0$ by $\reach$.
The \emph{language} of a system $\mathcal{S}$, denoted $\lang$, is the set $\{w \in \Sigma^{*}\mid\exists c' \in \reach~\text{such that}~\initconf \xrightarrow[]{w} c'\}$.  Note that $\lang$ describes the behavior of $\mathcal{S}$, starting from the initial configuration $\initconf$.
\end{defi}

\noindent In this paper, we adapt the following simplifications:
\begin{itemize}
\item We do not consider uncontrollable events (of the environment), since the BIP language is currently not supporting them.
However, our framework would allow us to do so.  More precisely, we solve priority synthesis using a game-theoretic version of controller synthesis~\cite{ramadge1989control}, in which uncontrollability can be modeled. Furthermore, since we consider only safety properties, our algorithms can be easily adapted to handle uncontrollable events.  \item We do not consider data transfer during the interaction, as it is merely syntactic rewriting over variables between different components.  \end{itemize}

\subsection{Priority Synthesis for Safety and Deadlock Freedom\label{subsec.algo.prioritysyn.def.priority.syn}}

\begin{defi}[Risk-Configuration/Deadlock Safety]
  Given a system $\mathcal{S} = (C , \Sigma, \mathcal{P})$ and the set
  of \emph{risk configuration} $\mathcal{C}_{risk} \subseteq \conf$ (also called \emph{bad states}), the system is \textbf{safe}
  if the following conditions hold. (A system that is not safe is called \textbf{unsafe}.)

\begin{itemize}
    \item {\bf (Deadlock-free)}
$\forall c \in \reach$, $\exists
      \sigma\in\Sigma, \exists c' \in \mathcal{R}_{\mathcal{S}}: c \xrightarrow[]{\sigma} c'$
    \item {\bf (Risk-state-free)}
$\mathcal{C}_{risk} \cap \reach  = \emptyset$.
\end{itemize}
\end{defi}


\begin{defi}[Priority Synthesis]
  Given a system $\mathcal{S} = (C , \Sigma, \mathcal{P})$, and the
  set of risk configuration $\mathcal{C}_{risk} \subseteq \mathcal{C}_\sys$, priority synthesis
  searches for a set of priorities $\mathcal{P}_{+}$ such that
 \begin{itemize}
    \item For $\mathcal{P}\cup\mathcal{P}_{+}$, the defined relation $\prec_{\mathcal{P}\cup\mathcal{P}_{+}} \; \subseteq\Sigma\times\Sigma$ is also (1) transitive and (2) non-reflexive.
    \item $(C , \Sigma, \mathcal{P}\cup\mathcal{P}_{+})$ is safe.
 \end{itemize}
\end{defi}

Given a system $\system$,  we define the size of $\system$ as the size of the product graph induced by $\system$, i.e, $|\reach|+ |\Sigma|$. Then, we have the following result.

\begin{theo}[Hardness of priority synthesis~\cite{cheng:hardness:2011}]
  Given a system $\mathcal{S} = (C , \Sigma, \mathcal{P})$, finding a set
  $\mathcal{P}_{+}$ of priorities such that
  $(C , \Sigma, \mathcal{P}\cup\mathcal{P}_{+})$ is safe is NP-complete in the size of $\mathcal{S}$.
\end{theo}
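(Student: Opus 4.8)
The claim is that deciding whether a suitable $\mathcal{P}_{+}$ exists is NP-complete in the size of $\mathcal{S}$ (where size $= |\reach| + |\Sigma|$). I need two directions: membership in NP, and NP-hardness.

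For membership in NP, the plan is: guess a candidate set $\mathcal{P}_{+} \subseteq \Sigma \times \Sigma$ (which has size polynomial in $|\Sigma|$, hence in the size of $\mathcal{S}$), then verify in polynomial time that (i) the relation $\prec_{\mathcal{P} \cup \mathcal{P}_{+}}$ is transitive and non-reflexive — this is a transitive-closure computation on a graph with $|\Sigma|$ vertices — and (ii) the constrained system $(C, \Sigma, \mathcal{P} \cup \mathcal{P}_{+})$ is safe. For (ii), note that adding priorities only removes edges from the product graph, so the reachable set of the constrained system is a subset of $\reach$; we can recompute the enabled interactions at each reachable configuration under the new priority relation and do a graph search over (a subset of) $\reach$ to check both risk-state-freedom and deadlock-freedom. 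All of this is polynomial in $|\reach| + |\Sigma|$, so the problem is in NP. Care is needed to note that \emph{``safe''} is checked on the product graph, so the input measure must genuinely be the product size for this to be polynomial — which is exactly the size convention the theorem adopts.

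For NP-hardness, the approach is a reduction from a standard NP-complete problem — I would reduce from 3-SAT (or from a graph problem such as vertex cover / independent set). Given a 3-SAT instance, build a BIP system whose interaction alphabet encodes literals/clauses, whose product graph encodes ``choosing a truth assignment,'' and whose risk configurations (or deadlocks) are exactly the configurations corresponding to an unsatisfied clause. The priorities in $\mathcal{P}_{+}$ act as the mechanism that commits to one literal over another (priorities are a stateless precedence, so a chosen set of priorities globally fixes which of two competing interactions fires), so a satisfying assignment corresponds to a consistent, acyclic choice of priorities that steers every run away from the bad states. The key things to get right are: (a) the gadget must be built so that its product graph has size polynomial in the formula — i.e., the components must be small and only polynomially many — so that the reduction is polynomial in the \emph{product} size and not merely in the syntactic description; (b) the non-reflexivity/transitivity constraint on $\prec$ must correspond exactly to ``the chosen literals are consistent'' (no variable set both true and false); and (c) deadlock-freedom must not be accidentally violated by the gadget in a way unrelated to satisfiability. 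Since this theorem is quoted from~\cite{cheng:hardness:2011}, I would cite that construction and only sketch the gadget here.

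The main obstacle is the hardness direction, specifically point (a): because the complexity is measured in the size of the product graph (which can be exponential in the number of components), a naive encoding that uses the parallel composition to ``compute'' a truth assignment would blow up the product and make the reduction trivial/non-polynomial. The reduction must therefore keep the product graph small — essentially linear in the 3-SAT instance — while still forcing the priority-selection problem to encode satisfiability; engineering the BIP components so that the product is compact yet the priority constraints (acyclicity + the safety/deadlock condition) are jointly equivalent to satisfiability is the delicate part. Membership in NP is comparatively routine, the only subtlety being to confirm that verification genuinely runs in time polynomial in $|\reach| + |\Sigma|$.
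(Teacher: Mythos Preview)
The paper does not actually prove this theorem: it is stated with a citation to~\cite{cheng:hardness:2011} and no proof is given in the text. So there is no ``paper's own proof'' to compare your proposal against; the result is imported wholesale from the cited companion paper.

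That said, your sketch is a reasonable outline of how such a result is typically established. Membership in NP via guess-and-verify over the product graph is the expected argument, and you correctly flag that the size convention $|\reach|+|\Sigma|$ is what makes the verification step polynomial. For hardness, a reduction from 3-SAT (or a related NP-complete problem) is indeed the route taken in~\cite{cheng:hardness:2011}, and your observation in point~(a) --- that the product graph of the gadget must remain polynomial in the formula size, since complexity is measured in the product rather than the syntactic description --- is precisely the non-trivial constraint on the construction. You also already note yourself that you would ultimately defer to the cited construction; since the present paper does the same, there is nothing further to reconcile here.
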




We briefly mention the definition of \textbf{behavioral safety}, which is a powerful notion to capture erroneous behavioral-patterns
for the system under design.

\begin{defi}[Behavioral Safety]
  Given a system $\mathcal{S} = (C , \Sigma, \mathcal{P})$ and a regular language $\mathcal{L}_{\neg P} \subseteq \Sigma^{*}$ called the \emph{risk specification}, the system is \textbf{B-safe} if $\lang \cap \mathcal{L}_{\neg P} = \emptyset$. A system that is not B-safe is called \textbf{B-unsafe}.
\end{defi}

It is well-known that the problem of asking for behavioral safety can be reduced to the problem of risk-state freeness.
More precisely, since $\mathcal{L}_{\neg P}$ can be represented by a finite automaton $\mathcal{A}_{\neg P}$ (the monitor), priority synthesis for behavioral safety can be reduced to priority synthesis in the synchronous product of the system $\mathcal{S}$ and $\mathcal{A}_{\neg P}$ with the goal to avoid any product state that has a final state of~$\mathcal{A}_{\neg P}$ in the second component.

\section{A Framework of Priority Synthesis based on Fault-Localization and Fault-Repair\label{sec.algo.prioritysyn.repair}}

In this section, we describe our symbolic encoding scheme, followed by presenting our priority synthesis mechanism
using a fault-localization and repair approach.

\subsection{System Encoding\label{subsec.algo.prioritysyn.encoding}}

\newcommand{\enc}{enc}

Our symbolic encoding is inspired by the execution semantics of the BIP engine, which
during execution, selects one of the enabled interactions and executes the interaction.
In our engine, we mimic the process and create a two-stage transition: For each iteration,

\begin{itemize}
    \item (Stage 0) The \emph{environment} raises all enabled interactions.
    \item (Stage 1) Based on the raised interactions, the \emph{controller} selects one enabled interaction (if there exists one) while respecting the priority, and updates the state based on the enabled interaction.
\end{itemize}

Given a system $\system = (C , \Sigma, \mathcal{P})$, we use the
following sets of Boolean variables to encode~$\system$:
\begin{itemize}
    \item $\{stg, stg'\}$ is the \emph{stage indicator} and its primed
      version.
    \item $\bigcup_{\sigma \in \Sigma}\{\sigma, \sigma'\}$ are the
      variables representing interactions and their primed version.
      We use the same letter for an interaction and the
      corresponding variable, because there is a one-to-one
      correspondence between them.
    \item $\bigcup_{i = 1\ldots m} Y_i \cup Y'_i$, where
      $Y_i=\{y_{i1},\dots, y_{ik}\}$ and $Y'_i=\{y'_{i1},\dots,
      y'_{ik}\}$ are the variables and their primed version,
      respectively, used to encode the locations $L_i$. (We use a
      binary encoding, i.e., $k=\lceil log |L_i|\rceil$). Given a
      location $l \in L_i$, we use $\enc(l)$ and $\enc'(l)$ to refer
      to the encoding of $l$ using $Y_i$ and
      $Y'_i$, respectively.
   \item $\bigcup_{i = 1\ldots m}\bigcup_{v\in V_i} \{v, v'\}$ are the
     variables of the components and their primed version.
\end{itemize}

\begin{algorithm}[H]
\DontPrintSemicolon
\SetKwInOut{Input}{input}\SetKwInOut{Output}{output}
\Input{System $\mathcal{S} = (C , \Sigma, \mathcal{P})$}
\Output{Stage-0 transition predicate $\mathcal{T}_{stage_0}$}
\Begin{
    \For{$\sigma\in \Sigma$}{
     \nl  \texttt{let} predicate $P_{\sigma} := \true$\;
     }
    \For{$\sigma\in \Sigma$}{
      \For{$i = \{1,\ldots,m\}$}{
        \nl\lIf{$\sigma \in \Sigma_i$}{
            $P_{\sigma} := P_{\sigma} \wedge \bigvee_{(l,g,\sigma,f,l')\in T_i} (\enc(l) \wedge g)$\;
         }
      }
    }
    \texttt{let} predicate $\mathcal{T}_{stage_0} := stg \wedge \neg stg'$\;
    \For{$\sigma\in \Sigma$}{
    \nl $\mathcal{T}_{stage_0}:= T_{stage_0} \wedge (\sigma' \leftrightarrow P_{\sigma})$\;
    }
    \For{$i = \{1,\ldots,m\}$}{
    \nl $\mathcal{T}_{stage_0}:= T_{stage_0} \wedge \bigwedge_{y\in Y_i}  y \leftrightarrow y' \wedge \bigwedge_{v\in V_i} v \leftrightarrow v'$\;
    }
    \texttt{return} $\mathcal{T}_{stage_0}$\;
}
\caption{Generate Stage-0 transitions\label{algo.prioritysyn.plant.transition}}
\end{algorithm}

\begin{algorithm}[H]
\DontPrintSemicolon
\SetKwInOut{Input}{input}\SetKwInOut{Output}{output}
\Input{System $\mathcal{S} = (C , \Sigma, \mathcal{P})$}
\Output{Stage-1 transition predicate $\mathcal{T}_{stage_1}$}
\Begin{
    \texttt{let} predicate $\mathcal{T}_{stage_1} := \false$\;
    \For{$\sigma\in \Sigma$}{
        \texttt{let} predicate $T_{\sigma} := \neg stg \wedge stg'$\;
         \For{$i = \{1,\ldots,m\}$}{
           \If{$\sigma \in \Sigma_i$}{
          \nl     $T_{\sigma} := T_{\sigma} \wedge \bigvee_{(l,g,\sigma,f,l')\in T_i}
    (\enc(l) \wedge g \wedge \sigma \wedge \sigma'
    \wedge \enc'(l') \wedge \bigwedge_{v\in V_i} v' \leftrightarrow f(v))$\;
            }
        }
        \For{$\sigma'\in \Sigma, \sigma' \neq \sigma$}{
         \nl   $T_{\sigma} := T_{\sigma} \wedge \sigma' = \false$\;
        }
        \For{$i = \{1,\ldots,m\}$}{
           \nl \lIf{$\sigma \not\in \Sigma_i$}{
               $T_{\sigma} := T_{\sigma} \wedge \bigwedge_{y\in Y_i}  y \leftrightarrow y' \wedge \bigwedge_{v\in V_i} v \leftrightarrow v'$\;
            }
        }
        $\mathcal{T}_{stage_1} := \mathcal{T}_{stage_1} \vee T_{\sigma}$\;
    }
    \For{$\sigma_1 \prec \sigma_2 \in \mathcal{P}$}{
      \nl   $\mathcal{T}_{stage_1} := \mathcal{T}_{stage_1} \wedge
      ((\sigma_1 \wedge {\sigma_2}) \to \neg {\sigma_1}') $\;
   }

    \texttt{return} $\mathcal{T}_{stage_1}$\;
}

\caption{Generate Stage-1 transitions\label{algo.prioritysyn.control.transition}}
\end{algorithm}

We use Algorithm~\ref{algo.prioritysyn.plant.transition}
and~\ref{algo.prioritysyn.control.transition} to create transition
predicates $\mathcal{T}_{stage_0}$ and $\mathcal{T}_{stage_1}$ for
Stage~0 and~1, respectively. Note that $\mathcal{T}_{stage_0}$ and $\mathcal{T}_{stage_1}$ can be
merged but we keep them separately, in order to
(1) have an easy and direct way to synthesize priorities,
(2) allow expressing the freedom of the environment, and
(3) follow the semantics of the BIP engine.

\begin{itemize}
\item In Algorithm~\ref{algo.prioritysyn.plant.transition},  Line~2
  computes for each interaction $\sigma$ the predicate $P_{\sigma}$ representing all the
  configurations in which $\sigma$ is enabled in the current configuration.
  In Line~3, starting from the first interaction,
  $\mathcal{T}_{stage_0}$ is continuously refined by conjoining
  $\sigma' \leftrightarrow P_{\sigma}$ for each interaction
  $\sigma$, i.e., the variables $\sigma'$ is true if and only if the
  interaction $\sigma$ is enabled.
 Finally, Line~4 ensures that
the system configuration does not change in stage~$0$.
\item In Algorithm~\ref{algo.prioritysyn.control.transition},
  Line~1,~2,~3 are used to create the transition in which
  interaction~$\sigma$ is executed (Line~2 ensures that only $\sigma$
  is executed; Line~3 ensures the stuttering move of unparticipated
  components). Given a priority $\sigma_1 \prec \sigma_2$, in
  configurations in which $\sigma_1$ and $\sigma_2$ are both enabled
  (i.e., $\sigma_1 \wedge \sigma_2$ holds), the conjunction with Line~4
  removes the possibility to execute $\sigma_1$ when $\sigma_2$ is
  also available.
\end{itemize}

\clearpage
\subsection{Step A. Finding Fix Candidates using Fault-localization\label{subsec.algo.prioritysyn.findingfix}} 

\begin{figure}[t]
\centering
 \includegraphics[width=0.6\columnwidth]{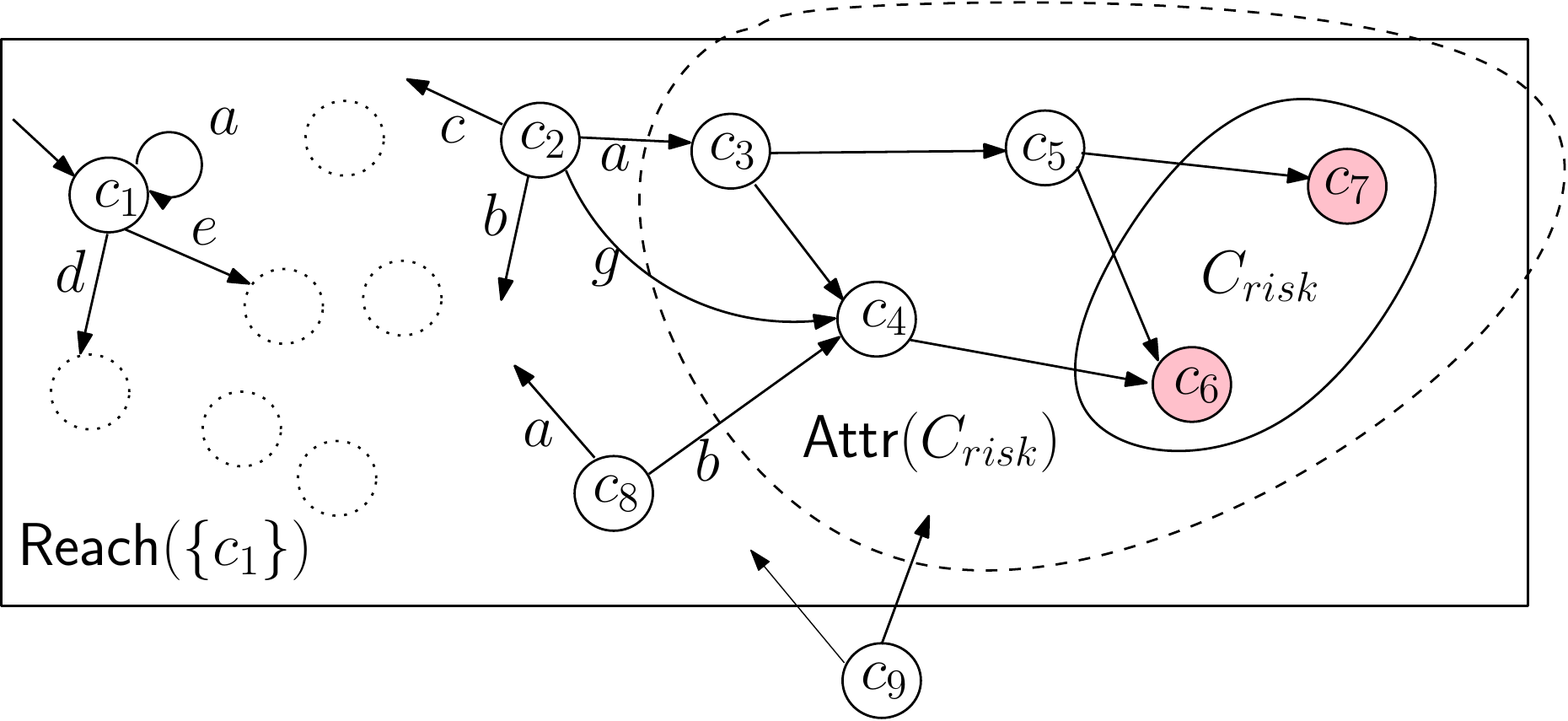}
  \caption{Locating fix candidates.}
 \label{fig:vissbip.locate.fix}
\end{figure}

Synthesizing a set of priorities to make the system safe can be done
in various ways, and we use Figure~\ref{fig:vissbip.locate.fix} to
illustrate our underlying idea. Consider a system starting from
state $c_1$. It has two risk configurations $c_6$ and $c_7$. In
order to avoid risk using priorities, one method is to work on the
initial configuration, i.e., to use the set of priorities $\{e \prec
a, d \prec a\}$.
Nevertheless, it can be
observed that the synthesized result is not very desirable, as the
behavior of the system has been greatly restricted.

Alternatively, our methodology works \emph{backwards} from the set of risk states and finds states which is able to \emph{escape from risk}. In Figure~\ref{fig:vissbip.locate.fix}, as states $c_3$, $c_4$, $c_5$ unavoidably enter a risk state, they are within the \emph{risk-attractor} ($\textsf{Attr}(\mathcal{C}_{risk})$). For state $c_2$, $c_8$, and $c_9$, there exists an interaction which avoids risk. Thus, if a set of priorities $\mathcal{P}_{+}$ can ensure that from $c_2$, $c_8$, and $c_9$, the system can not enter the attractor, then $\mathcal{P}_{+}$ is the result of synthesis. Furthermore, as $c_9$ is not within the set of reachable states from the initial configuration ($\textsf{Reach}(\{c_1\})$ in Figure~\ref{fig:vissbip.locate.fix}), then it can be eliminated without consideration. We call $\{c_2, c_8\}$ a \textbf{fault-set}, meaning that an erroneous interaction can be taken to reach the risk-attractor.

Under our formulation, we can directly utilize the result of \textbf{algorithmic game solving}~\cite{gradel:2002:automata} to compute the
fault-set. 
Algorithm~\ref{algo.fault.localization} explains the underlying computation: For conciseness, we use $\exists \Xi$ ($\exists \Xi'$) to represent existential quantification over all umprimed (primed) variables used in the system encoding. Also, we use the operator $\texttt{SUBS}(X, \Xi, \Xi')$ for variable swap (substitution) from unprimed to primed variables in $X$: the $\texttt{SUBS}$ operator
is common in most BDD packages.

\begin{itemize}
    \item In the beginning, we create $P_{ini}$ for initial configuration,  $P_{dead}$ for deadlock (no interaction is enabled), and $P_{risk}$ for risk configurations.
    \item In Part A, adding a stage-0 configuration can be computed similar to adding the environment state in a safety game. In a safety game, for an environment configuration to be added, there exists a transition which leads to the attractor.
    \item In Part A, adding a stage-1 configuration follows the intuition described earlier. In a safety game, for a control configuration~$c$ to be added, all outgoing transitions of~$c$ should lead to the attractor. This is captured by the set difference operation $\texttt{PointTo} \setminus \texttt{Escape}$ in Line~5.
    \item In Part B, Line~7 creates the transition predicate entering the attractor. Line~8 creates predicate $\texttt{OutsideAttr}$ representing the set of stage-1 configuration outside the attractor. In Line~9, by conjuncting with $\texttt{OutsideAttr}$ we ensure that the algorithm does not return a transition within the attractor.
    \item Part C removes transitions whose source is not within the set of reachable states.
\end{itemize}

\newcommand\eq{\leftrightarrow}

\begin{algorithm}[H]
\DontPrintSemicolon
\SetKwInOut{Input}{input}\SetKwInOut{Output}{output}
\Input{System $\mathcal{S} = (C , \Sigma, \mathcal{P})$, $\mathcal{T}_{stage_0}$, $\mathcal{T}_{stage_1}$}
\Output{$\mathcal{T}_{f} \subseteq \mathcal{T}_{stage_1}$ as the set of stage-1 transitions starting from the fault-set but entering the risk attractor}
\Begin{
    \textbf{let} $P_{ini} := stg \:\wedge \bigwedge_{i = 1\ldots m}  (\enc(\initloc{i}) \wedge \bigwedge_{v\in V_i} v \eq \initeval{i}(v))$\;
    \textbf{let} $P_{dead} := \neg stg \wedge \bigwedge_{\sigma \in
      \Sigma} \neg {\sigma} $\;
    \textbf{let} $P_{risk} := \neg stg \wedge \bigvee_{(l_{1}, e_{1}, \ldots, l_{m}, e_{m})\in \mathcal{C}_{risk}}$
      $(\enc(l_1) \wedge \bigwedge_{v\in V_1} v\eq e_1(v) \wedge \ldots$
      $\enc(l_m) \wedge \bigwedge_{v\in V_m} v\eq e_m(v)) $\;
    \;
    \tcp{Part A: solve safety game}
    \textbf{let} $\Attr_{pre} := P_{dead} \vee P_{risk}$,  $\Attr_{post} := \false$\;
    \nl \While{\true}{
    \tcp{add stage-0 (environment) configurations}
    \nl $\Attr_{post,0} := \exists\Xi': (\mathcal{T}_{stage_0} \wedge \texttt{SUBS}((\exists \Xi': \Attr_{pre}), \Xi, \Xi'))$\;
    \tcp{add stage-1 (system) configurations}
    \nl \textbf{let} $\texttt{PointTo} := \exists\Xi': (\mathcal{T}_{stage_1} \wedge \texttt{SUBS}((\exists \Xi': \Attr_{pre}), \Xi, \Xi'))$\;
    \nl \textbf{let} $\texttt{Escape} := \exists\Xi': (\mathcal{T}_{stage_1} \wedge \texttt{SUBS}((\exists \Xi': \neg \Attr_{pre}), \Xi, \Xi'))$\;
    \nl $\Attr_{post,1} := \texttt{PointTo} \setminus \texttt{Escape}$\;
    \nl $\Attr_{post} := \Attr_{pre} \vee \Attr_{post,0} \vee \Attr_{post,1} $\tcp*[r]{Union the result}
     \lIf{$\Attr_{pre} \leftrightarrow \Attr_{post}$}{
        \texttt{break}\tcp*[r]{Break when the image saturates}
     }
     \lElse{
     $\Attr_{pre} := \Attr_{post}$\;
     }
    }
    \;
    \tcp{Part B: extract $\mathcal{T}_{f}$ }
    \nl $\texttt{PointTo} := \mathcal{T}_{stage_1} \wedge \texttt{SUBS}((\exists \Xi': \Attr_{pre}), \Xi, \Xi'))$\;
    \nl $\texttt{OutsideAttr} := \neg \Attr_{pre} \wedge (\exists \Xi': \mathcal{T}_{stage_1})$\;
    \nl $\mathcal{T}_{f} := \texttt{PointTo} \wedge \texttt{OutsideAttr}$\;
    \;
    \tcp{Part C: eliminate unused transition using reachable states}
    \textbf{let} $\Reach_{pre}:= P_{ini}$, $\Reach_{post}:= \false$\;
    \nl \While{\true}{
      $\Reach_{post} := \Reach_{pre} \vee \texttt{SUBS}(\exists \Xi: (\Reach_{pre} \wedge (\mathcal{T}_{stage_0} \vee \mathcal{T}_{stage_1})), \Xi', \Xi)$\;
      \lIf{$\Reach_{pre} \leftrightarrow \Reach_{post}$}{
            \texttt{break}\tcp*[r]{Break when the image saturates}
      }
      \lElse{$\Reach_{pre} := \Reach_{post}$\;
      }
    }
    \nl \textbf{return} $\mathcal{T}_{f} \wedge \Reach_{post}$
}
\caption{Fault-localization\label{algo.fault.localization}}
\end{algorithm}

\subsection{Step B. Priority Synthesis via Conflict Resolution \label{subsec.algo.prioritysyn.repair} - from Stateful to Stateless} 

Due to our system encoding, in
Algorithm~\ref{algo.fault.localization}, the return value
$\mathcal{T}_{f}$ contains not only the risk interaction but also all
possible interactions simultaneously available. Recall Figure~\ref{fig:vissbip.locate.fix},
$\mathcal{T}_{f}$ returns three transitions, and we can extract \textbf{priority candidates} from each transition.
\begin{itemize}
    \item On $c_2$, $a$ enters the risk-attractor, while $b,g,c$ are also available. We have the following candidates $\{a \prec b, a\prec g, a \prec c\}$.
    \item On $c_2$, $g$ enters the risk-attractor, while $a,b,c$ are also available. We have the following candidates $\{g \prec b, g\prec c, g \prec a\}$\footnote{Notice that at least one candidate is a true candidate for risk-escape. Otherwise, during the attractor computation, $c_2$ will be included within the attractor.}.
    \item On $c_8$, $b$ enters the risk-attractor, while $a$ is also available. We have the following candidate $b \prec a$.
\end{itemize}

From these candidates, we can perform \textbf{conflict resolution} and
generate a set of priorities that ensures avoiding the attractor.  For
example, $\{a\prec c, g \prec a, b \prec a\}$ is a set of satisfying
priorities to ensure safety. Note that the set $\{a\prec b, g \prec b,
b \prec a\}$ is not a legal priority set, because it creates circular
dependencies. In our implementation, conflict resolution is performed
using SAT solvers: In the SAT problem, any priority $\sigma_1 \prec
\sigma_2$ is presented as a Boolean variable $\underline{\sigma_1
  \prec \sigma_2}$, which can be set to $\true$ or $\false$.  If the
generated SAT problem is satisfiable, for all variables
$\underline{\sigma_1 \prec \sigma_2}$ which is evaluated to $\true$,
we add priority $\sigma_1 \prec \sigma_2$ to $\mathcal{P}_{+}$. The
synthesis engine creates four types of clauses. 
\begin{enumerate}
    \item \textbf{[Priority candidates]} For each edge $t \in \mathcal{T}_{f}$ which enters the risk attractor using $\sigma$ and having $ \sigma_1, \ldots, \sigma_e$ available actions (excluding $\sigma$), create clause $(\bigvee_{i=1 \ldots e} \underline{\sigma \prec \sigma_{i}})$\footnote{In implementation, Algorithm~\ref{algo.fault.localization} works symbolically on BDDs and proceeds on \textbf{cubes} of the risk-edges (a cube contains a set of states having the same enabled interactions and the same risk interaction), hence it avoids enumerating edges state-by-state.}.
    \item \textbf{[Existing priorities]} For each priority $\sigma \prec \sigma' \in \mathcal{P}$, create
        clause $(\underline{\sigma \prec \sigma'})$.
    \item \textbf{[Non-reflective]} For each interaction $\sigma$ used in (1) and (2), create clause
            $(\neg \underline{\sigma \prec \sigma})$.
    \item \textbf{[Transitive]} For any three interactions $\sigma_1, \sigma_2, \sigma_3$  used in (1) and (2), create clause
            $((\underline{\sigma_1 \prec \sigma_2} \wedge \underline{\sigma_2 \prec \sigma_3})\Rightarrow \underline{\sigma_1 \prec \sigma_3})$.
\end{enumerate}
\noindent When the problem is satisfiable, we only output the set of priorities within the priority candidates (as non-reflective and transitive clauses are inferred properties). Admittedly, here we still solve an NP-complete problem. Nevertheless,
\begin{itemize}
    \item The number of interactions involved in the fault-set can be much smaller than $\Sigma$.
    \item As the translation does not involve complicated encoding, we observe from our experiment that solving the SAT problem does not occupy a large portion (less than $20\%$ for all benchmarks) of the total execution time.
\end{itemize}

%


\subsection{Optimization}

\begin{figure}[t]
\centering
 \includegraphics[width=0.4\columnwidth]{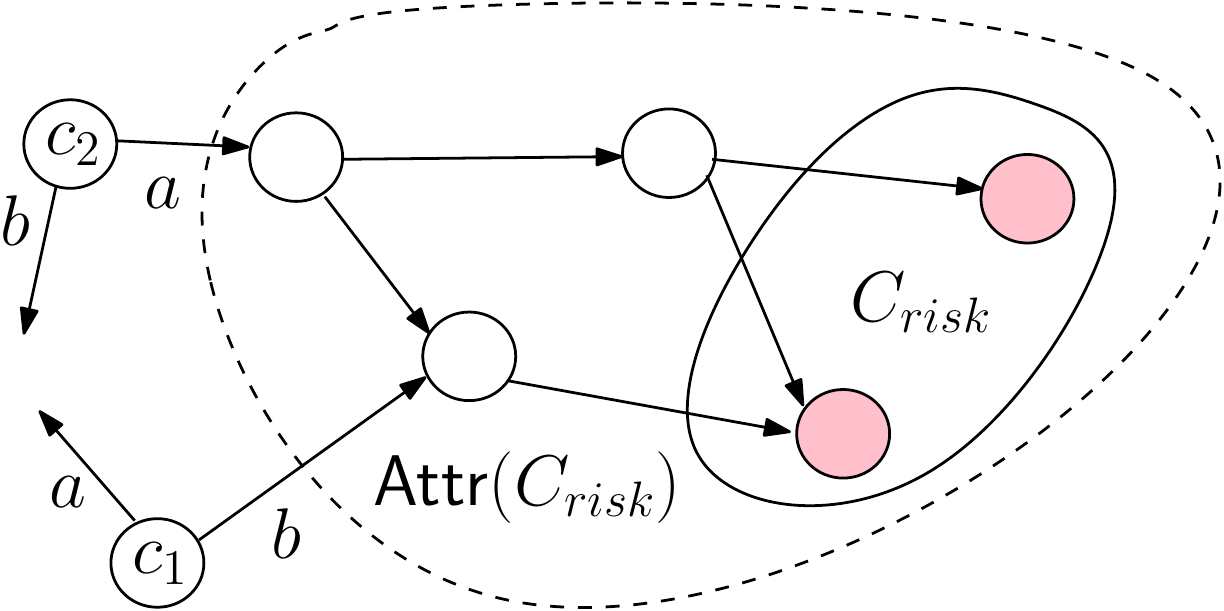}
  \caption{A simple scenario where conflicts are unavoidable on the fault-set.}
 \label{fig:vissbip.repushing}
\end{figure}

Currently, we use the following optimization techniques compared to the preliminary implementation of~\cite{cheng:vissbip:2011}.

\subsubsection{Handling unsatisfiability}
In the resolution scheme in Section~\ref{subsec.algo.prioritysyn.repair}, when the generated SAT problem is unsatisfiable, we can redo the process by moving some states in the fault-set to the attractor. This procedure is implemented by selecting a subset of priority candidates and annotate to the original system. We call this process \textbf{priority-repushing}. E.g., consider the system $\mathcal{S} = (C , \Sigma, \mathcal{P})$ in Figure~\ref{fig:vissbip.repushing}. The fault-set $\{c_1, c_2\}$ is unable to resolve the conflict: For $c_1$ the priority candidate is $a \prec b$, and for $c_2$ the priority candidate is $b \prec a$. When we redo the analysis with $\mathcal{S} = (C , \Sigma, \mathcal{P} \cup \{a \prec b\})$, this time $c_2$ will be in the attractor, as now $c_2$ must respect the priority and is unable to escape using $a$. Currently in our implementation, we supports the repushing under fixed depth to increase the possibility of finding a fix.

\subsubsection{Initial Variable Ordering: Modified FORCE Heuristics}
As we use BDDs to compute the risk-attractor, a good initial variable ordering can greatly influence the total required time solving the game. Although finding an optimal initial variable ordering is known to be NP-complete~\cite{tani1993complexity}, many heuristics can be applied to find a good yet non-optimal ordering\footnote{Also, dynamic variable ordering, a technique which changes the variable ordering at run-time, can be beneficial when no good variable ordering is known~\cite{clarke1999mc}}. The basic idea of these heuristics is to group variables close if they participate in the same transition~\cite{clarke1999mc}; experiences have shown that this creates a BDD diagram of smaller size. Thus our goal is to find a heuristic algorithm which can be computed efficiently while creates a good ordering.

We adapt the concept in the FORCE heuristic~\cite{aloul2003force}. Although the purpose of the FORCE heuristic is to work on SAT problems, we find the concept very beneficial in our problem setting. We explain the concept of FORCE based on the example in~\cite{aloul2003force}, and refer interested readers to the paper~\cite{aloul2003force} for full details.

Given a CNF formula $C=c_1\wedge c_2 \wedge c_3$, where $c_1=(a\vee c), c_2=(a\vee d), c_3=(b\vee d)$.
\begin{itemize}
    \item Consider a variable ordering $\langle a,b,c,d\rangle$. For this ordering, we try to evaluate it by considering the sum of the \emph{span}. A span is the maximum distance between any two variables within the same clause. For $c_1$, under the ordering the span equals~$2$; for~$c_2$ the span equals~$3$, and the sum of the span equals~$7$.
    \item Consider another variable ordering $\langle c,a,d,b\rangle$. Then the sum of span equals~$3$. Thus we consider that $\langle c,a,d,b\rangle$ is superior than $\langle a,b,c,d\rangle$.
    \item The purpose of the FORCE heuristic is to reduce the sum of such span. In the CNF example, the name of the heuristics suggests that a conceptual force representing each clause is grouping variables used within the clause.
\end{itemize}

Back to priority synthesis, consider the set of components $\bigcup_{i=1}^{n} C_i$ together with interaction labels $\Sigma$.
We may similarly compute the sum of all spans, where now a span is \emph{the maximum distance between any two components
participating the same interaction $\sigma\in \Sigma$}. Precisely, we analogize clauses and variables in the original FORCE heuristic with interaction symbols and components. Therefore, we regard the FORCE heuristics equally applicable to create a
better initial variable ordering for priority synthesis.

\noindent\textbf{[Algorithm Sketch]} Our modified FORCE heuristics is as follows.
\begin{enumerate}
    \item Create an initial order of vertices composed from a set of components $\bigcup_{i=1}^{n} C_i$ and interactions $\sigma\in \Sigma$. Here we allow the user to provide an initial variable ordering, such that the FORCE heuristic can be applied more efficiently.
    \item Repeat for limited time or until the span stops decreasing:
        \begin{itemize}
            \item Create an empty list.
            \item For each interaction label $\sigma\in \Sigma$, derive its center of gravity $COG(\sigma)$ by computing the \emph{average position} of all participated components. Use the average position as its value. Add the interaction with the value to the list.
            \item For each component $C_i$, compute its value by $\frac{\sum_{\sigma \in Sigma_i}COG(\sigma)}{|\Sigma_i|}$. Add the component with the value to the list.
            \item Sort the list based on the value. The resulting list is considered as a new variable ordering. Compute the new span and compare with the span from the previous ordering.
        \end{itemize}
\end{enumerate}

\subsubsection{Dense variable encoding} The encoding in Section~\ref{subsec.algo.prioritysyn.encoding} is \emph{dense} compared to the encoding in~\cite{cheng:vissbip:2011}. In~\cite{cheng:vissbip:2011}, for each component $C_i$ participating interaction $\sigma$, one separate variable $\sigma_i$ is used. Then a joint action is done by an \texttt{AND} operation over all variables, i.e., $\bigwedge_i \sigma_i$. This eases the construction process but makes BDD-based game solving very inefficient: For a system $\mathcal{S}$, let $\Sigma_{use1}\subseteq \Sigma$ be the set of interactions where only one component participates within. Then the encoding in~\cite{cheng:vissbip:2011} uses at least $2|\Sigma\setminus \Sigma_{use1}|$ more BDD variables than the dense encoding.

\subsubsection{Safety Engine Speedup} Lastly, as our created game graph is \emph{bipartite}, Algorithm~\ref{algo.fault.localization} can be refined to work on two separate images of stage-0 and stage-1, such that line~2 and line~\{3,4\} are executed in alternation.

\section{Handling Complexities\label{sec.algo.prioritysyn.complexities}}

In verification, it is standard to use \emph{abstraction} and \emph{modularity} to reduce the complexity of the analyzed systems.
Abstraction is also useful in synthesis.
However, note that if an abstract system is deadlock-free, it does not imply that the concrete system is as well.
E.g., in Figure~\ref{fig:vissbip.abstraction}, the system composed by $C_1$ and $C_2$ contains deadlock (if both interactions $a$ and $b$ are required to be paired for execution). However, when we over-approximate $C_1$ to an abstract system $C_1^{\alpha}$, a system composed by $C_1^{\alpha}$ and $C_2$ is deadlock free. On the other hand, deadlock-freeness of an under-approximation also does not imply deadlock-freeness of a concrete system. An obvious example can be obtained by under-approximating the system $C_1$ in Figure~\ref{fig:vissbip.abstraction} to an abstract system $C_1^{\beta}$. Again, the composition of $C_1^{\beta}$ and $C_2$ is deadlock-free, while the concrete system is not.
Therefore, it is challenging to find a suitable abstract system such that the abstract system is deadlock-free implying that the concrete system is also deadlock-free.

\begin{figure}[h]
\centering
 \includegraphics[width=0.45\columnwidth]{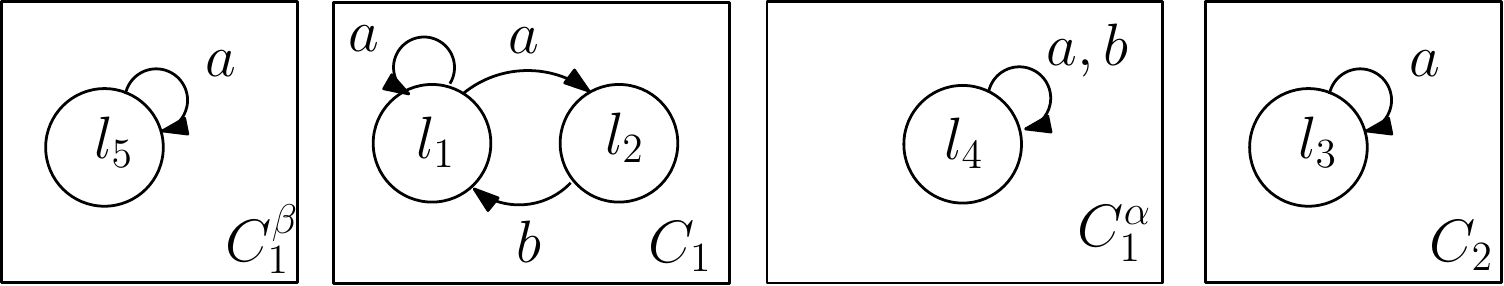}
  \caption{A scenario where the concrete system contains deadlock, but the abstract system is deadlock free.}
 \label{fig:vissbip.abstraction}
\end{figure}

In the following, we propose three techniques.

\subsection{Data abstraction\label{subsec.algo.prioritysyn.dp.abstraction}}

Data abstraction techniques presented in the previous work~\cite{bensalem2008compositional} and implemented in the D-Finder tool kit~\cite{bensalem:dfinder2:2011} are \emph{deadlock preserving}, i.e., synthesizing the abstract system to be deadlock free ensures that the concrete system is also deadlock free. Basically, the method works on an abstract system composed by components abstracted component-wise from concrete components. For example, if an abstraction preserves all control variables (i.e., all control variables are mapped by identity) and the mapping between the concrete and abstract system is precise with respect to all guards and updates (for control variables) on all transitions, then it is deadlock preserving. For further details, we refer interested readers to~\cite{bensalem2008compositional,bensalem:dfinder2:2011}.

\subsection{Alphabet abstraction\label{subsec.algo.prioritysyn.alphabet}}

\begin{figure}
\centering
 \includegraphics[width=0.5\columnwidth]{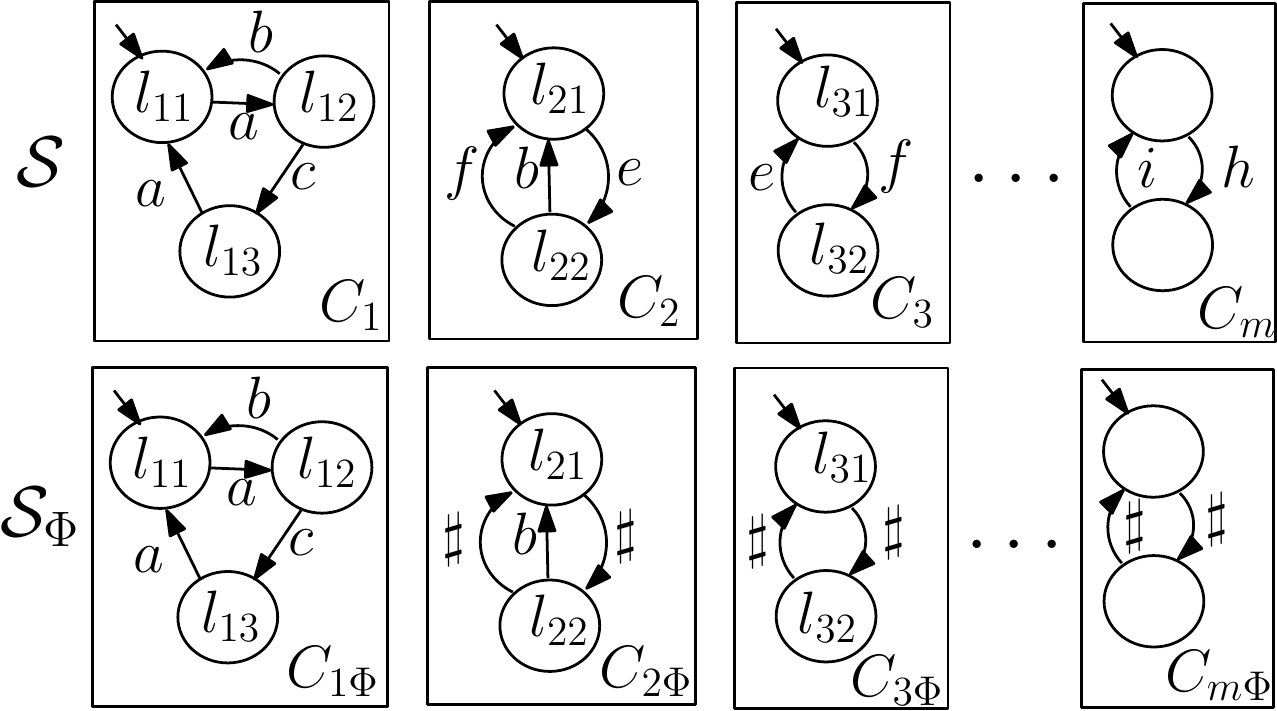}
  \caption{A system $\mathcal{S}$ and its $\sharp$-abstract system $\sys_{\Phi}$, where $\Sigma_{\Phi} = \Sigma \setminus \{a,b,c\}$.}
 \label{fig:vissbip.alphabet.abstraction}
\end{figure}

Second, we present \emph{alphabet abstraction}, targeting to synthesize priorities to avoid deadlock
(but also applicable for risk-freeness with extensions). The underlying intuition is to abstract concrete behavior of
components out of concern.

\begin{defi}[Alphabet Transformer]
Given a set $\Sigma$ of interaction alphabet. Let $\Sigma_{\Phi} \subseteq \Sigma$ be \textbf{abstract alphabet}.
Define $\alpha: \Sigma \rightarrow (\Sigma\setminus\Sigma_{\Phi})\cup\{\sharp\}$ as the alphabet transformer, such that
for $\sigma \in \Sigma$,
\begin{itemize}
    \item If $\sigma \in \Sigma_{\Phi}$, then $\alpha(\sigma) := \sharp$.
    \item Otherwise, $\alpha(\sigma) := \sigma$.
\end{itemize}
\end{defi}

\begin{defi}[Alphabet Abstraction: Syntax]
Given a system $\sys=(C,\Sigma,\pri)$ and abstract alphabet $\Sigma_{\Phi}\subseteq \Sigma$, define the \textbf{$\sharp$-abstract system} $\sys_{\Phi}$  to be $(C_{\Phi},(\Sigma\setminus\Sigma_{\Phi})\cup\{\sharp\}, \pri_{\Phi})$, where
\begin{itemize}
    \item $C_{\Phi} = \bigcup_{i=1\ldots m} C_{i\Phi}$, where $C_{i\Phi}= (L_i, V_i, \Sigma_{i\Phi},  T_{i\Phi}, \initloc{i}, \initeval{i})$ changes from $C_{i}$ by \textbf{syntactically} replacing every occurrence of $\sigma \in \Sigma_i$ to $\alpha(\sigma)$.
    \item $\pri= \bigcup_{i=1\ldots k} \sigma_i\prec\sigma_i'$ changes to $\pri_{\Phi} = \bigcup_{i=1\ldots k} \alpha(\sigma_i)\prec\alpha(\sigma_i')$, and the relation defined by $\pri_{\Phi}$ should be transitive and nonreflexive.
\end{itemize}
\end{defi}

The definition for a configuration (state) of a $\sharp$-abstract system follows Definition~2.
Denote the set of all configuration of $\mathcal{S}_{\Phi}$ reachable from $c_0$ as $\mathcal{C}_{\mathcal{S}_{\Phi}}$.
The update of configuration for an interaction $\sigma \in\Sigma\setminus\Sigma_{\Phi}$ follows Definition~3.
The only difference is within the semantics of the $\sharp$-interaction.

\begin{defi}[Alphabet Abstraction: Semantics for $\sharp$-interaction]\label{def:sharp.semantics}
Given a configuration $c=(l_1, v_1, \ldots, l_m, v_m)$, the $\sharp$-interaction is \textbf{enabled} if the following conditions hold.
\begin{enumerate}
    \item ($\geq 1$ participants) \textbf{Exists} $i \in \{1,\ldots, m\}$ where $\sharp \in \Sigma_{i\Phi}$, $\exists t_i = (l_i,g_i,\sharp,f_i,l_i') \in T_{i\Phi}$ such that $g(v_i) = \texttt{True}$.
    \item (No higher priorities enabled) There exists no other interaction $\sigma_{\flat} \in \Sigma, (\sharp, \sigma_{\flat}) \in \pri_{\Phi}$ such that $\forall i \in \{1,\ldots, m\}$ where $\sigma_{\flat} \in \Sigma_i$, $\exists t_{i\flat} = (l_i ,g_{i\flat},\sigma_{i\flat},f_{i\flat},l_i'') \in T_i$, $g_{i\flat}(v_i) = \texttt{True}$.
\end{enumerate}
Then for a configuration $c=(l_1, v_1, \ldots, l_m, v_m)$, the configuration after taking an enabled $\sharp$-interaction changes to
$c^{\flat} = (l_1^{\flat}, v_1^{\flat}, \ldots, l_m^{\flat}, v_m^{\flat})$:
\begin{itemize}
    \item (\textbf{May-update} for participated components) If $\sharp \in \Sigma_i$, then for transition $t_i = (l_i,g_i,\sharp,f_i,l_i') \in T_{i\Phi}$ such that $g_i(v_i) = \texttt{True}$, either
        \begin{enumerate}
        \item $l_i^{\flat} = l_i'$, $v_i^{\flat} = f_i(v_i)$, or
        \item $l_i^{\flat} = l_i$, $v_i^{\flat} = v_i$.
        \end{enumerate}
        Furthermore, at least one component updates (i.e., select option~1).
    \item (Stutter for unparticipated components) If $\sharp \not\in \Sigma_i$, $l_i^{\flat} = l_i$, $v_i^{\flat} = v_i$.
\end{itemize}
\end{defi}

\noindent Lastly, the behavior of a $\sharp$-abstract system follows Definition~4. In summary, the above definitions indicate that in a $\sharp$-abstract system, any local transitions having alphabet symbols within $\Sigma_{\Phi}$ can be executed in isolation or jointly. Thus, we have the following result.

\begin{lemma}\label{lem:simulation}
Given a system $\sys$ and its $\sharp$-abstract system $\sys_{\Phi}$,
define $\mathcal{R}_{\sys}$ ($\mathcal{R}_{\sys_{\Phi}}$) be the reachable states of system $\sys$ (corresponding $\sharp$-abstract system)
from from the initial configuration $c^0$. Then $\mathcal{R}_{\sys}\subseteq \mathcal{R}_{\sys_{\Phi}}$.
\end{lemma}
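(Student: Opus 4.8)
The plan is to show that every reachable configuration of $\sys$ is also reachable in $\sys_\Phi$ by exhibiting, for each concrete run, a corresponding abstract run ending in the same configuration. The natural tool is induction on the length of the interaction sequence $w$ witnessing $c^0 \xrightarrow{w} c$ in $\sys$. The base case is immediate, since both systems share the initial configuration $c^0$, so $c^0 \in \mathcal{R}_{\sys_\Phi}$ trivially.

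For the inductive step, suppose $c^0 \xrightarrow{w} c$ in $\sys$ and $c \xrightarrow{\sigma} c'$ for some interaction $\sigma \in \Sigma$ enabled in $c$; by the induction hypothesis $c \in \mathcal{R}_{\sys_\Phi}$. I would split on whether $\sigma \in \Sigma_\Phi$ or not. If $\sigma \notin \Sigma_\Phi$, then $\alpha(\sigma) = \sigma$, and the transition relations of $C_{i\Phi}$ restricted to non-$\sharp$ labels are syntactically identical to those of $C_i$; I must check that $\sigma$ is still enabled in $c$ viewed as a configuration of $\sys_\Phi$ — joint participation transfers verbatim, and the "no higher priority enabled" condition can only be easier to satisfy since priorities are merely renamed under $\alpha$ (one should note $\alpha$ may collapse several labels, but this only removes, never adds, priority obstructions to $\sigma$; this is the one spot needing a careful line). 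Then $c \xrightarrow{\sigma} c'$ holds in $\sys_\Phi$ with the same successor $c'$, so $c' \in \mathcal{R}_{\sys_\Phi}$. If $\sigma \in \Sigma_\Phi$, then in $C_{i\Phi}$ the label becomes $\sharp$, and I invoke Definition~\ref{def:sharp.semantics}: the concrete $\sigma$-step updates exactly the components $C_i$ with $\sigma \in \Sigma_i$, which is precisely a legal $\sharp$-interaction in which every participant chooses the "update" option~(1) — this is allowed since "at least one component updates" is satisfied (indeed all do), and the enabledness precondition for $\sharp$ follows from that of $\sigma$ together with the fact that $\alpha$ only weakens the priority constraints. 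Hence again $c \xrightarrow{\sharp} c'$ in $\sys_\Phi$ with the identical resulting configuration, giving $c' \in \mathcal{R}_{\sys_\Phi}$.

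The main obstacle I anticipate is the priority bookkeeping in the enabledness conditions: one has to argue cleanly that replacing $\prec$ by its $\alpha$-image cannot newly disable an interaction that was enabled concretely. Concretely, if some $\bar\sigma$ with $(\sigma,\bar\sigma) \notin \pri$ were mapped by $\alpha$ to a label $\bar\sigma'$ with $(\alpha(\sigma), \bar\sigma') \in \pri_\Phi$, that would seemingly block $\sigma$ in the abstract system. I would handle this by noting that $\pri_\Phi$ is defined as the $\alpha$-image of $\pri$, so any pair in $\pri_\Phi$ lifts to a pair in $\pri$ on some preimages; combined with the requirement that $\pri_\Phi$ is transitive and nonreflexive (which rules out degenerate collapses), the concrete enabledness of $\sigma$ is preserved. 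Everything else is routine: the update functions and stutter clauses are copied syntactically by the definition of $C_{i\Phi}$, so successor configurations coincide exactly, and no separate simulation-relation argument beyond equality of configurations is needed.
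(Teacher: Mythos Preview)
Your proposal is correct and follows the same underlying idea as the paper --- namely, that the abstract semantics of Definition~\ref{def:sharp.semantics} subsumes the concrete semantics of Definition~\ref{def:semantics}, so every concrete step can be matched by an abstract step to the \emph{same} successor configuration. The paper's own proof is a single sentence (``Result from the comparison between Definition~\ref{def:semantics} and~\ref{def:sharp.semantics}''), so your induction on run length with a case split on $\sigma\in\Sigma_\Phi$ versus $\sigma\notin\Sigma_\Phi$ is exactly the unpacking one would expect of that remark, and is considerably more explicit than what the paper provides.

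One comment: you are right that the priority bookkeeping is the only place requiring care, and in fact the paper does not spell this out at all. Your sketch for handling it (tracing a blocking pair in $\pri_\Phi$ back to a preimage pair in $\pri$) is the correct mechanism; just be aware that when $\sharp$ appears on the right of a pair in $\pri_\Phi$, the ``joint participation'' test for $\sharp$ as a blocking interaction must be read via Definition~\ref{def:semantics} (universal over participating components), not via the existential enabling condition of Definition~\ref{def:sharp.semantics}, which keeps the argument sound. The paper sweeps this under ``compare the definitions''; your version is already more careful.
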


\begin{proof}
Result from the comparison between Definition~\ref{def:semantics} and~\ref{def:sharp.semantics}.
\end{proof}

As alphabet abstraction looses the execution condition by overlooking paired interactions,
a $\sharp$-abstract system is deadlock-free does not imply that the concrete system is deadlock free.
E.g., consider a system $\mathcal{S}'$ composed only by $C_2$ and $C_3$ in Figure~\ref{fig:vissbip.alphabet.abstraction}.
When $\Phi = \Sigma \setminus \{b\}$, its $\sharp$-abstract system $\mathcal{S}_{\Phi}'$ is shown below.
In $\mathcal{S'}$, when $C_2$ is at location $l_{21}$ and $C_3$ is at location $l_{31}$, interaction $e$ and $f$ are disabled, meaning that there exists a deadlock from the initial configuration. Nevertheless, in $\mathcal{S}_{\Phi}'$, as the $\sharp$-interaction is always enabled, it is deadlock free.

In the following, we strengthen the deadlock condition by the notion of \textbf{$\sharp$-deadlock}. Intuitively,
a configuration is $\sharp$-deadlocked, if it is deadlocked, or the only interaction available is the $\sharp$-interaction.

\begin{defi}[$\sharp$-deadlock]\label{def:sharplock}
Given a $\sharp$-abstract system $\sys_{\Phi}$, a configuration $c\in \mathcal{C}_{\mathcal{S}_{\Phi}}$ is $\sharp$-deadlocked, if
 $ \nexists \sigma \in \Sigma\setminus\Sigma_{\Phi}, c' \in \mathcal{C}_{\mathcal{S}_{\Phi}}$ such that $c\xrightarrow{\sigma} c'$.
\end{defi}
In other words, a configuration $c$ of $\sys_{\Phi}$ is $\sharp$-deadlocked implies that all interactions labeled with $\Sigma\setminus\Sigma_{\Phi}$ are disabled at $c$.

\begin{lemma}\label{lem:deadlock}
Given a system $\sys$ and its $\sharp$-abstract system $\sys_{\Phi}$, define ${\cal{D}}$ as the set of deadlock states reachable from the initial state in $\sys$, and ${\cal{D}}^\sharp$ as the set of $\sharp$-deadlock states reachable from the initial state in $\sys_{\Phi}$.
Then ${\cal{D}} \subseteq {\cal{D}}^\sharp$.
\end{lemma}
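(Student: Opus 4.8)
The plan is to show that every reachable deadlock state of $\sys$ is both reachable in $\sys_\Phi$ and $\sharp$-deadlocked there. Fix $c \in {\cal D}$. By definition $c \in \reach$ and no interaction $\sigma \in \Sigma$ is enabled in $c$ (in the sense of Definition~\ref{def:semantics}). By Lemma~\ref{lem:simulation}, $c \in \mathcal{R}_{\sys_\Phi}$, so $c$ is a reachable configuration of the $\sharp$-abstract system; it remains to argue $c$ is $\sharp$-deadlocked, i.e., by Definition~\ref{def:sharplock}, that no interaction $\sigma \in \Sigma \setminus \Sigma_\Phi$ admits a successor from $c$ in $\sys_\Phi$.

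The key observation is that the non-abstracted interactions $\sigma \in \Sigma \setminus \Sigma_\Phi$ are treated identically in $\sys$ and $\sys_\Phi$: the syntactic transformer $\alpha$ is the identity on them, so the transition sets, guards, and participating components of such $\sigma$ are unchanged in $C_{i\Phi}$ versus $C_i$; moreover the priority set changes only by renaming, and a pair $(\sigma,\bar\sigma)$ with both components outside $\Sigma_\Phi$ is preserved verbatim. Hence I would prove the following: for $\sigma \in \Sigma\setminus\Sigma_\Phi$, $\sigma$ is enabled at $c$ in $\sys_\Phi$ \emph{only if} $\sigma$ is enabled at $c$ in $\sys$. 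The joint-participation condition is literally the same predicate on $c$. For the ``no higher priorities enabled'' condition, one must be slightly careful: in $\sys_\Phi$ the obstructing interaction could be $\sharp$ (if $(\sigma,\sharp)\in\pri_\Phi$), whereas in $\sys$ it would be some $\sigma'\in\Sigma_\Phi$ with $(\sigma,\sigma')\in\pri$. But this direction only makes enabledness \emph{harder} in $\sys_\Phi$ (there may be an extra way to block $\sigma$), so it does not break the implication we need: if $\sigma$ is enabled in $\sys_\Phi$ at $c$, then in particular joint participation holds, and no $\bar\sigma \in \Sigma\setminus\Sigma_\Phi$ with $(\sigma,\bar\sigma)\in\pri$ satisfies joint participation at $c$ (since those pairs are preserved and $\bar\sigma$'s enabledness predicate is unchanged); this is exactly what is needed to conclude $\sigma$ is enabled in $\sys$ — \emph{provided} we also rule out a blocking $\sigma'\in\Sigma_\Phi$, which is where the deadlock hypothesis on $c$ enters: since $c$ is a deadlock in $\sys$, \emph{no} interaction at all satisfies joint participation in $\sys$ at $c$, so no such $\sigma'$ can block anything. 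Combining: if some $\sigma\in\Sigma\setminus\Sigma_\Phi$ had a successor from $c$ in $\sys_\Phi$, it would be enabled in $\sys_\Phi$, hence satisfy joint participation at $c$, hence (same predicate) satisfy joint participation at $c$ in $\sys$, contradicting that $c$ is a deadlock of $\sys$.

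So the cleanest route actually sidesteps the priority subtlety entirely: $c\in{\cal D}$ means \emph{no} interaction of $\Sigma$ satisfies joint participation in $\sys$ at $c$; joint participation for $\sigma\in\Sigma\setminus\Sigma_\Phi$ is the identical predicate in $\sys_\Phi$; therefore no such $\sigma$ satisfies joint participation in $\sys_\Phi$ at $c$ either, so none is enabled and none has a $\sigma$-successor; by Definition~\ref{def:sharplock}, $c$ is $\sharp$-deadlocked in $\sys_\Phi$; together with $c\in\mathcal{R}_{\sys_\Phi}$ from Lemma~\ref{lem:simulation}, $c\in{\cal D}^\sharp$.

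The main obstacle, such as it is, is purely bookkeeping: being careful that ``deadlock'' in $\sys$ (Definition of Risk-Configuration/Deadlock Safety) really does mean that no $\sigma$ satisfies the joint-participation clause of Definition~\ref{def:semantics}, rather than merely that no $\sigma$ is enabled after priorities are applied. Since priorities only ever \emph{disable} interactions, ``no $\sigma$ enabled'' does not immediately give ``no $\sigma$ jointly participates''; however, a deadlock is defined as the absence of \emph{any} outgoing transition $c\xrightarrow{\sigma}c'$, and an interaction that satisfies joint participation but is blocked by priority would still be re-examined — one should note that if \emph{some} interaction satisfies joint participation, then a $\prec$-maximal such interaction is enabled and yields a successor, contradicting deadlock. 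Thus genuine deadlock of $c$ does imply no interaction satisfies joint participation, and the argument above goes through. I would state this maximality remark explicitly as the one non-trivial step, then the rest is a direct appeal to Lemma~\ref{lem:simulation} and the syntactic identity of $\alpha$ on $\Sigma\setminus\Sigma_\Phi$.
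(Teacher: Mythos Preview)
Your proposal is correct and follows essentially the same two-step structure as the paper's proof: invoke Lemma~\ref{lem:simulation} for reachability in $\sys_\Phi$, then argue that disabledness of all interactions at $c$ in $\sys$ forces disabledness of every $\sigma\in\Sigma\setminus\Sigma_\Phi$ at $c$ in $\sys_\Phi$. The paper's own proof is terser and simply asserts the second step; your added care about the priority clause and the $\prec$-maximality remark (ensuring that deadlock really does mean no interaction satisfies joint participation) fills in detail the paper leaves implicit, but does not constitute a different route.
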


\begin{proof}
Consider a deadlock state $c\in {\cal{D}}$.
\begin{enumerate}
    \item Based on Lemma~\ref{lem:simulation}, $c$ is also in $\mathcal{R}_{\sys_{\Phi}}$.
    \item In $\sys$, as $c \in {\cal{D}}$, all interactions are disabled in $c$. Then correspondingly in $\sys_{\Phi}$, for state $c$, any interaction $\sigma\in \Sigma\setminus \Sigma_{\Phi}$ is also disabled. Therefore, $c$ is $\sharp$-deadlocked.
\end{enumerate}
Based on~1 and~2, $c \in {\cal{D}}^\sharp$. Thus ${\cal{D}} \subseteq {\cal{D}}^\sharp$.
\end{proof}

\begin{theo}\label{the:free}
Given a system $\sys$ and its $\sharp$-abstract system $\sys_{\Phi}$, if $\sys_{\Phi}$ is $\sharp$-deadlock-free, then $\sys$ is deadlock-free.
\end{theo}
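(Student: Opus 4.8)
The plan is to prove the contrapositive: assume $\sys$ is \emph{not} deadlock-free and show that $\sys_{\Phi}$ is \emph{not} $\sharp$-deadlock-free. If $\sys$ has a deadlock, then there is a deadlock state $c$ reachable from the initial configuration $c^0$ in $\sys$, i.e., $c \in {\cal D}$. By Lemma~\ref{lem:deadlock}, ${\cal D} \subseteq {\cal D}^\sharp$, so $c \in {\cal D}^\sharp$, meaning $c$ is a reachable $\sharp$-deadlock state in $\sys_{\Phi}$. Hence $\sys_{\Phi}$ has a $\sharp$-deadlock state, so it is not $\sharp$-deadlock-free. Taking the contrapositive gives the theorem.

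Written directly (rather than via contraposition), the argument is: suppose $\sys_{\Phi}$ is $\sharp$-deadlock-free, i.e., ${\cal D}^\sharp = \emptyset$. By Lemma~\ref{lem:deadlock} we have ${\cal D} \subseteq {\cal D}^\sharp = \emptyset$, so ${\cal D} = \emptyset$. This means there is no reachable deadlock state in $\sys$, which is precisely the statement that $\sys$ is deadlock-free (recalling the \textbf{Deadlock-free} clause of the Risk-Configuration/Deadlock Safety definition: $\forall c \in \reach$, $\exists \sigma \in \Sigma$, $\exists c' \in \reach$ with $c \xrightarrow{\sigma} c'$ — equivalently, $\reach$ contains no deadlocked state). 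This is essentially a one-line corollary of Lemma~\ref{lem:deadlock}.

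The only subtlety worth spelling out is the bookkeeping between ``$\sharp$-deadlock-free'' as a property of the abstract system and ``deadlock-free'' as a property of the concrete system, and checking that Lemma~\ref{lem:deadlock} is being applied in the right direction: it bounds the concrete deadlock set \emph{above} by the abstract $\sharp$-deadlock set, which is exactly what is needed so that emptiness of the latter forces emptiness of the former. There is no real obstacle here — Lemma~\ref{lem:simulation} (the reachability inclusion $\mathcal{R}_\sys \subseteq \mathcal{R}_{\sys_\Phi}$) and Lemma~\ref{lem:deadlock} do all the work; the theorem simply repackages them. If a referee wanted more, one could also unfold Lemma~\ref{lem:deadlock}'s proof inline to make the excerpt self-contained, but given that both lemmas are already established, I would keep the proof to the two-step deduction above.
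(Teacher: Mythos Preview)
Your proof is correct and essentially the same as the paper's: both deduce ${\cal D}=\emptyset$ from ${\cal D}^\sharp=\emptyset$ via the inclusion ${\cal D}\subseteq{\cal D}^\sharp$ of Lemma~\ref{lem:deadlock}. The paper phrases it as $\mathcal{R}_{\sys_{\Phi}}\cap{\cal D}^\sharp=\emptyset$ together with $\mathcal{R}_{\sys}\subseteq\mathcal{R}_{\sys_{\Phi}}$ (Lemma~\ref{lem:simulation}) implying $\mathcal{R}_{\sys}\cap{\cal D}=\emptyset$, but since ${\cal D}$ and ${\cal D}^\sharp$ are already defined as subsets of the respective reachable sets, this is just a cosmetic difference.
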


\begin{proof}
As $\sys_{\Phi}$ is $\sharp$-deadlock-free, we have $\mathcal{R}_{\sys_{\Phi}}\cap {\cal{D}}^\sharp=\emptyset$. According to Lemma \ref{lem:simulation} and \ref{lem:deadlock}, we have $\mathcal{R}_{\sys}\subseteq \mathcal{R}_{\sys_{\Phi}}$ and ${\cal{D}}\subseteq {\cal{D}}^\sharp$. Hence $\mathcal{R}_{\sys}\cap {\cal{D}}=\emptyset$, implying that $\sys$ is deadlock-free.
\end{proof}

\noindent{(Algorithmic issues)} Based on the above results, the use of alphabet abstraction and the notion of $\sharp$-deadlock offers a methodology for priority synthesis working on abstraction. Detailed steps are presented as follows.
 \begin{enumerate}
 \item Given a system $\sys$, create its $\sharp$-abstract system $\sys_{\Phi}$ by a user-defined $\Sigma_{\Phi} \subseteq \Sigma$. In our implementation, we let users select a subset of components $C_{s_1},\ldots, C_{s_k}\in C$, and generate $\Sigma_{\Phi} = \Sigma\setminus(\Sigma_{s_1}\cup\ldots\cup\Sigma_{s_k})$.
     \begin{itemize}
        \item E.g., consider system $\mathcal{S}$ in Figure~\ref{fig:vissbip.alphabet.abstraction} and its $\sharp$-abstract system $\sys_{\Phi}$. The abstraction is done by looking at $C_{1}$ and maintaining $\Sigma_1 =\{a,b,c\}$.
        \item When a system contains no variables, the algorithm proceeds by eliminateing components whose interaction are completely in the abstract alphabet. In Figure~\ref{fig:vissbip.alphabet.abstraction}, as for $i= \{3\ldots m\}$, $\Sigma_{i\Phi} = \{\sharp\}$, it is sufficient to eliminate all of them during the system encoding process.
    \end{itemize}
\item  If $\sys_{\Phi}$ contains $\sharp$-deadlock states, we could obtain a $\sharp$-deadlock-free system by synthesizing a set of priorities $\pri_{+}$, where the defined relation $\prec_{+} \subseteq ((\Sigma\setminus\Sigma_{\Phi})\cup \{\sharp\})\times (\Sigma\setminus\Sigma_{\Phi})$ using techniques presented in Section~\ref{sec.algo.prioritysyn.repair}.
    \begin{itemize}
        \item In the system encoding, the predicate $P_{\sharp dead}$ for $\sharp$-deadlock is defined as $stg=\false \wedge \bigwedge_{\sigma \in \Sigma\setminus\Sigma_{\Phi}} \sigma =\false$.
        \item If the synthesized priority is having the form $\sharp \prec \sigma$, then translate it into a set of priorities $\bigcup_{\sigma' \in \Sigma_{\Phi}} \sigma' \prec \sigma$.
    \end{itemize}

\end{enumerate}

\section{Assume-guarantee Based Priority Synthesis\label{sec.algo.prioritysyn.assume.guarantee}}

We use an assume-guarantee based compositional synthesis algorithm for behavior safety. Given a system $\mathcal{S} = (C_1 \cup C_2 , \Sigma, \mathcal{P})$ and a risk specification described by a \textit{deterministic finite state automaton} $R$, where $\mathcal{L}(R)\subseteq \Sigma^*$. We use $|\mathcal{S}|$ to denote the size of $\mathcal{S}$ and $|R|$ to denote the number of states of $R$.
The synthesis task is to find a set of priority rules $\mathcal{P}_{+}$ such that adding $\mathcal{P}_{+}$ to the system $\mathcal{S}$ can make it B-Safe with respect to the risk specification $\mathcal{L}(R)$. This can be done using an \emph{assume-guarantee} rule that we will describe in the next paragraph.

We first define some notations needed for the rule. The system $\mathcal{S}_{+} = (C_1 \cup C_2 , \Sigma, \mathcal{P} \cup \mathcal{P}_{+})$ is obtained by adding priority rules $\mathcal{P}_{+}$ to the system $\mathcal{S}$.
We use $\mathcal{S}_1 = (C_1 , \Sigma, \mathcal{P}\cap \Sigma\times\Sigma_1)$ and $\mathcal{S}_2 = (C_2 , \Sigma, \mathcal{P}\cap \Sigma\times\Sigma_2)$ to denote two sub-systems of $\mathcal{S}$. We further partition the alphabet $\Sigma$ into three parts $\Sigma_{12}$, $\Sigma_1$, and $\Sigma_2$, where $\Sigma_{12}$ is the set of interactions appear both in the sets of components $C_1$ and $C_2$ (in words, the shared alphabet of $C_1$ and $C_2$), $\Sigma_{i}$ is the set of interactions appear only in the set of components $C_i$ (in words, the local alphabet of $C_i$) for $i=1,2$. Also, we require that the decomposition of the system must satisfy that $\mathcal{P} \subseteq \Sigma\times(\Sigma_1 \cup \Sigma_2)$, which means that we do not allow a shared interaction to have a higher priority than any other interaction.
This is \textbf{required} for the soundness proof of the assume-guarantee rule, as we also explained later
that we will \textbf{immediately lose soundness by relaxing this restriction}.
For $i=1,2$, the system $\mathcal{S}_{i+}=(C_i \cup \{d_i\} , \Sigma, (\mathcal{P}\cap \Sigma\times\Sigma_i) \cup \mathcal{P}_i)$ is obtained by (1) adding priority rules $\mathcal{P}_{i}\subseteq \Sigma \times\Sigma_i$ to $\mathcal{S}_i$ and, (2) in order to simulate stuttering transitions, adding a component $d_i$ that contains only one location with self-loop transitions labeled with symbols in $\Sigma_{3-i}$ (the local alphabet of the other set of components).
Then the following assume-guarantee rule can be used to decompose the synthesis task into two smaller sub-tasks:

\[
\begin{array}{rcll}
\mathcal{L}(\mathcal{S}_{1+})\cap \mathcal{L}(R) \cap \mathcal{L}(A)  &=& \emptyset &\ \ \ \ \ \ \ \ (a)\\
\mathcal{L}(\mathcal{S}_{2+})\cap \mathcal{L}(\overline{A})&=&\emptyset&\ \ \ \ \ \ \ \ (b)\\
\hline
\mathcal{L}(\mathcal{S}_{+})\cap \mathcal{L}(R) &=& \emptyset &\ \ \ \ \ \ \ \ (c)
\end{array}
\]

The above assume-guarantee rule says that $\mathcal{S}_{+}$ is B-Safe with respect to $\mathcal{L}(R)$ iff there exists an assumption automaton $A$ such that (1) $\mathcal{S}_{1+}$ is B-Safe with respect to $\mathcal{L}(R) \cap \mathcal{L}(A)$ and (2) $\mathcal{S}_{2+}$ is B-Safe with respect to $\mathcal{L}(\overline{A})$, where $\overline{A}$ is the complement of $A$, $\mathcal{P}_{+} = \mathcal{P}_{1} \cup \mathcal{P}_{2}$ and no conflict in $\mathcal{P}_{1}$ and $\mathcal{P}_{2}$. In the following, we prove the above assume-guarantee rule is both sound and complete. Nevertheless, it is unsound for deadlock freeness. An example can be found at the beginning of Section~\ref{sec.algo.prioritysyn.complexities}.

\begin{theo}[Soundness]\label{the:ag-sound}
Let $\mathcal{P}_{1}$ and $\mathcal{P}_{2}$ be two non-conflicting priority rules, $A$ be the assumption automaton, $R$ be the risk specification automaton, $\mathcal{S}_{1+} = (C_1 \cup \{d_1\}, \Sigma, (\mathcal{P}\cap \Sigma\times\Sigma_1) \cup \mathcal{P}_1)$, and $\mathcal{S}_{2+} = (C_2\cup \{d_2\} , \Sigma, (\mathcal{P}\cap \Sigma\times\Sigma_2) \cup \mathcal{P}_2)$, where $\mathcal{P}_i  \subseteq \Sigma \times \Sigma_i$ for $i=1,2$ and $\mathcal{P} \subseteq \Sigma \times (\Sigma_1 \cup \Sigma_2)$. If $\mathcal{L}(\mathcal{S}_{1+})\cap \mathcal{L}(R) \cap \mathcal{L}(A)=\emptyset$ and $\mathcal{L}(\mathcal{S}_{2+})\cap \mathcal{L}(\overline{A})=\emptyset$. The priority rule $\mathcal{P}_{1} \cup \mathcal{P}_{2}$ ensures that the system $\mathcal{S} = (C_1 \cup C_2 , \Sigma, \mathcal{P})$ is B-Safe with respect to $R$.
\end{theo}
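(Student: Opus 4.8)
The plan is to reduce the statement to a single language-inclusion lemma and then close the argument by a case split on the assumption automaton. The lemma to prove first is
\[
\mathcal{L}(\mathcal{S}_{+})\;\subseteq\;\mathcal{L}(\mathcal{S}_{1+})\cap\mathcal{L}(\mathcal{S}_{2+}),
\]
with all three languages read over the common alphabet $\Sigma$. Granting this, pick any $w\in\mathcal{L}(\mathcal{S}_{+})$ and suppose for contradiction that $w\in\mathcal{L}(R)$. Since $\overline{A}$ is the complement of $A$ over $\Sigma^{*}$, either $w\in\mathcal{L}(A)$, in which case $w\in\mathcal{L}(\mathcal{S}_{1+})\cap\mathcal{L}(R)\cap\mathcal{L}(A)$ contradicting hypothesis~$(a)$, or $w\in\mathcal{L}(\overline{A})$, in which case $w\in\mathcal{L}(\mathcal{S}_{2+})\cap\mathcal{L}(\overline{A})$ contradicting hypothesis~$(b)$. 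Hence $\mathcal{L}(\mathcal{S}_{+})\cap\mathcal{L}(R)=\emptyset$, i.e.\ $\mathcal{S}_{+}$ is B-safe w.r.t.\ $R$; and since $\mathcal{P}_{1},\mathcal{P}_{2}$ are non-conflicting, $\mathcal{P}_{1}\cup\mathcal{P}_{2}$ induces a transitive, non-reflexive relation, so it is an admissible synthesized priority set.

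For the lemma I would argue by a projection (simulation) argument, proving $\mathcal{L}(\mathcal{S}_{+})\subseteq\mathcal{L}(\mathcal{S}_{1+})$ and appealing to symmetry (swap the roles of $1$ and $2$) for the other conjunct. Given a run of $\mathcal{S}_{+}$ producing $w=\sigma_{1}\cdots\sigma_{k}$, project every configuration onto the components of $C_{1}$ (discarding the $C_{2}$-part) and adjoin the unique location of the stuttering gadget $d_{1}$; this yields a candidate run of $\mathcal{S}_{1+}$ on the same word. One then checks step by step that the candidate is legal, splitting on the block of the partition $\Sigma=\Sigma_{12}\uplus\Sigma_{1}\uplus\Sigma_{2}$ that $\sigma_{j}$ belongs to: for $\sigma_{j}\in\Sigma_{1}$, $C_{1}$ repeats the very transition it made in $\mathcal{S}_{+}$ and $d_{1}$ stutters; for $\sigma_{j}\in\Sigma_{12}$, $C_{1}$ again repeats its transition while $d_{1}$ is idle (it carries no $\Sigma_{12}$-label, and $C_{1}$ was ready in $\mathcal{S}_{+}$, hence ready here); for $\sigma_{j}\in\Sigma_{2}$, $C_{1}$ stutters and $d_{1}$ fires its $\sigma_{j}$-self-loop. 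Guards and updates are inherited verbatim, so the one real obligation is that $\sigma_{j}$ is \emph{enabled} in $\mathcal{S}_{1+}$ at the projected state --- that no interaction of strictly higher priority under $(\mathcal{P}\cap\Sigma\times\Sigma_{1})\cup\mathcal{P}_{1}$ jointly participates there.

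The priority obligation rests on two facts. Every priority in $(\mathcal{P}\cap\Sigma\times\Sigma_{1})\cup\mathcal{P}_{1}$ has its higher element in $\Sigma_{1}$ (for $\mathcal{P}_{1}$ because $\mathcal{P}_{1}\subseteq\Sigma\times\Sigma_{1}$, for $\mathcal{P}\cap\Sigma\times\Sigma_{1}$ by construction), and a $\Sigma_{1}$-interaction is local to $C_{1}$, so its joint-participation-enabledness depends only on the $C_{1}$-part of the configuration --- which the projection preserves exactly, with $d_{1}$ contributing nothing since $\Sigma_{1}\cap\Sigma_{2}=\emptyset$. So if some $\bar\sigma$ with $\sigma_{j}\prec\bar\sigma$ were jointly enabled at the projected state, then $\bar\sigma\in\Sigma_{1}$ and $\bar\sigma$ would be jointly enabled at the corresponding $\mathcal{S}_{+}$-configuration too; but $\sigma_{j}\prec\bar\sigma$ also holds in $\mathcal{P}\cup\mathcal{P}_{1}\cup\mathcal{P}_{2}$, so $\sigma_{j}$ would be disabled in $\mathcal{S}_{+}$ there, contradicting that the run took $\sigma_{j}$. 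This is exactly where the standing hypothesis $\mathcal{P}\subseteq\Sigma\times(\Sigma_{1}\cup\Sigma_{2})$ is load-bearing: it guarantees that no shared interaction ever sits on the high-priority side, so that local enabledness faithfully witnesses global enabledness; allowing $\sigma\prec\sigma_{12}$ with $\sigma_{12}\in\Sigma_{12}$ would let $\mathcal{S}_{1+}$ disable a transition that $\mathcal{S}_{+}$ can still take, destroying the inclusion (and with it soundness). I expect the routine obstacle to be the bookkeeping of the $d_{i}$-gadget and the three-way alphabet case analysis, and the one genuinely delicate point to be stating precisely, and then exploiting, the ``enabledness depends only on the local state'' claim under the clean-partition hypothesis.
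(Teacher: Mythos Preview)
Your proposal is correct and follows essentially the same route as the paper: the paper also reduces the theorem to the composition lemma $\mathcal{L}(\mathcal{S}_{+})\subseteq\mathcal{L}(\mathcal{S}_{1+})\cap\mathcal{L}(\mathcal{S}_{2+})$ (its Lemma~\ref{lem:composition}), then uses the $A$/$\overline{A}$ dichotomy to conclude, and proves the lemma by the same projection argument with the three-way case split on $\Sigma_{12}\uplus\Sigma_{1}\uplus\Sigma_{2}$ and the observation that every higher-priority interaction lies in the local alphabet. Your identification of where the hypothesis $\mathcal{P}\subseteq\Sigma\times(\Sigma_{1}\cup\Sigma_{2})$ is load-bearing matches the paper's own discussion of why the rule fails without it.
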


\begin{proof}
First, from $\mathcal{L}(\mathcal{S}_{1+})\cap \mathcal{L}(R) \cap \mathcal{L}(A)=\emptyset$ and $\mathcal{L}(\mathcal{S}_{2+})\cap \mathcal{L}(\overline{A})=\emptyset$, we can obtain the relation between those languages described in Figure~\ref{fig:agproof}. From the figure, one can see that the two languages $\mathcal{L}(\mathcal{S}_{1+})\cap \mathcal{L}(R)$ and $\mathcal{L}(\mathcal{S}_{2+})$ are disjoint. This follows that $\mathcal{L}(\mathcal{S}_{1+})\cap \mathcal{L}(R) \cap \mathcal{L}(\mathcal{S}_{2+}) =\emptyset$. By Lemma~\ref{lem:composition}, we have $\mathcal{L}(\mathcal{S}_+)\cap \mathcal{L}(R)\subseteq \mathcal{L}(\mathcal{S}_{1+})\cap \mathcal{L}(\mathcal{S}_{2+})\cap \mathcal{L}(R) =\emptyset$. Hence the set of priorities $\mathcal{P}_{1} \cup \mathcal{P}_{2}$ ensures that $\mathcal{S}$ is B-Safe with respect to $R$.
\end{proof}

\begin{figure}
\centering
 \includegraphics[width=0.45\columnwidth]{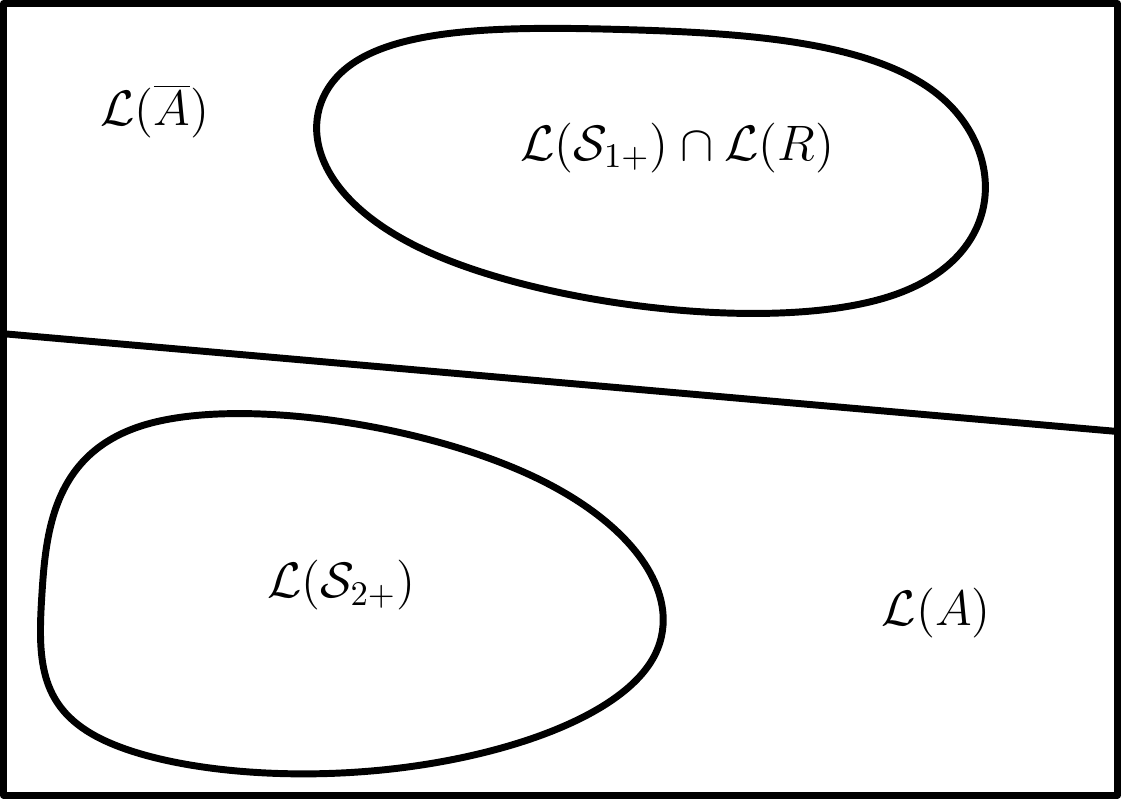}
  \caption{The relation between the languages.}
 \label{fig:agproof}
\end{figure}

\begin{lemma}[Composition]\label{lem:composition}
Let $\mathcal{S}_{1} = (C_1 \cup \{d_1\}, \Sigma, \mathcal{P}_1)$, and $\mathcal{S}_{2} = (C_2 \cup \{d_2\}, \Sigma, \mathcal{P}_2)$, and $\mathcal{S}_{1+2} = (C_1 \cup C_2 , \Sigma, \mathcal{P}_1 \cup \mathcal{P}_2)$
be three systems, where $\mathcal{P}_i  \subseteq \Sigma \times \Sigma_i$ for $i=1,2$. We have $\mathcal{L}(\mathcal{S}_{1+2})\subseteq \mathcal{L}(\mathcal{S}_{1})\cap \mathcal{L}(\mathcal{S}_{2})$.
\end{lemma}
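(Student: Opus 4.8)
The plan is to establish the inclusion by a step‑by‑step simulation: I will show that every execution of $\mathcal{S}_{1+2}$ projects to an execution of $\mathcal{S}_1$ carrying the same label sequence, and symmetrically to one of $\mathcal{S}_2$. Since $\mathcal{L}(\mathcal{S}_{1+2})$ is exactly the set of labels of executions from $\initconf$, this gives $\mathcal{L}(\mathcal{S}_{1+2})\subseteq\mathcal{L}(\mathcal{S}_1)$ and $\mathcal{L}(\mathcal{S}_{1+2})\subseteq\mathcal{L}(\mathcal{S}_2)$, hence the claim. For a configuration $c$ of $\mathcal{S}_{1+2}$ let $\pi_1(c)$ denote the configuration of $\mathcal{S}_1$ obtained by keeping the local state of every component of $C_1$ and adjoining the unique location of $d_1$; define $\pi_2$ analogously. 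Note $\pi_1(\initconf)$ is the initial configuration of $\mathcal{S}_1$. I would then prove by induction on $|w|$ that $\initconf \xrightarrow{w} c$ in $\mathcal{S}_{1+2}$ implies $\pi_1(\initconf)\xrightarrow{w}\pi_1(c)$ in $\mathcal{S}_1$ (and likewise for index $2$). The base case $w=\varepsilon$ is immediate because the initial configurations agree on the $C_1$‑components and $d_1$ sits in its only location.

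For the inductive step, suppose $\initconf \xrightarrow{w} c \xrightarrow{\sigma} c'$ in $\mathcal{S}_{1+2}$ and, by hypothesis, $\pi_1(\initconf)\xrightarrow{w}\pi_1(c)$. I need the step $\pi_1(c)\xrightarrow{\sigma}\pi_1(c')$ to be legal in $\mathcal{S}_1$, i.e.\ $\sigma$ is enabled at $\pi_1(c)$ in the sense of Definition~\ref{def:semantics} and the induced successor is $\pi_1(c')$. \emph{Clause 1 (joint participation)} transfers downward for free: each component of $C_1$ with $\sigma$ in its alphabet uses a transition whose guard already held in $\mathcal{S}_{1+2}$ (guards over $C_1$‑variables evaluate identically at $c$ and at $\pi_1(c)$), and if $\sigma\in\Sigma_2$ then $d_1$ always offers its self‑loop on $\sigma$, while if $\sigma\in\Sigma_{12}\cup\Sigma_1$ then $d_1$ does not participate. \emph{Clause 2 (no enabled higher priority)} is the crux. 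Suppose some $\bar\sigma$ with $(\sigma,\bar\sigma)\in\mathcal{P}_1$ satisfied joint participation at $\pi_1(c)$ in $\mathcal{S}_1$. From $\mathcal{P}_1\subseteq\Sigma\times\Sigma_1$ we get $\bar\sigma\in\Sigma_1$, so $\bar\sigma$ occurs only in components of $C_1$ and in neither $C_2$ nor $d_1$; therefore joint participation of $\bar\sigma$ at $\pi_1(c)$ in $\mathcal{S}_1$ is literally the same requirement as joint participation of $\bar\sigma$ at $c$ in $\mathcal{S}_{1+2}$. Since $(\sigma,\bar\sigma)\in\mathcal{P}_1\subseteq\mathcal{P}_1\cup\mathcal{P}_2$, this would make $\bar\sigma$ an enabled higher‑priority competitor of $\sigma$ at $c$, contradicting that $\sigma$ was enabled in $\mathcal{S}_{1+2}$. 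Hence no such $\bar\sigma$ exists and $\sigma$ is enabled at $\pi_1(c)$. Taking in $\mathcal{S}_1$ the same transitions for the $C_1$‑components (plus $d_1$'s self‑loop when $\sigma\in\Sigma_2$) and stuttering everywhere else yields exactly $\pi_1(c')$, closing the induction; the $\mathcal{S}_2$ case is symmetric using $\mathcal{P}_2\subseteq\Sigma\times\Sigma_2$.

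The only genuine obstacle is preserving Clause 2 when deleting priorities, and it is precisely the hypothesis $\mathcal{P}_i\subseteq\Sigma\times\Sigma_i$ that removes it. Passing from $\mathcal{P}_1\cup\mathcal{P}_2$ to $\mathcal{P}_1$ can only enable more interactions, so one must rule out that a priority of $\mathcal{P}_1$ which is inert in $\mathcal{S}_{1+2}$ (its higher‑priority target fails joint participation globally) becomes active in $\mathcal{S}_1$ (where that test is weaker). Restricting priority targets to the local alphabet $\Sigma_i$ guarantees such targets are decided inside $C_i$ alone, so ``enabled in $\mathcal{S}_i$'' and ``enabled in $\mathcal{S}_{1+2}$'' coincide for them; were a shared target from $\Sigma_{12}$ allowed, the stutter component $d_i$ would spuriously certify its joint participation in $\mathcal{S}_i$ and the inclusion would fail --- the same failure mode the paper flags for the assume‑guarantee rule. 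Everything else (prefix‑closedness of $\mathcal{L}$, determinism of the successor once the fired transitions are fixed, and the role of $d_i$ as a placeholder keeping the foreign local alphabet $\Sigma_{3-i}$ executable as stutter) is routine.
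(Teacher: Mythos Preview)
Your proposal is correct and follows essentially the same approach as the paper: induction on the word, showing that each enabled step in $\mathcal{S}_{1+2}$ is enabled in $\mathcal{S}_1$ (and symmetrically in $\mathcal{S}_2$) by verifying both clauses of Definition~\ref{def:semantics}, with the priority clause handled via the restriction $\mathcal{P}_i\subseteq\Sigma\times\Sigma_i$ forcing any competing $\bar\sigma$ to lie in $\Sigma_i$ and hence to be decided entirely within $C_i$. Your presentation is in fact somewhat more explicit than the paper's (you introduce the projections $\pi_i$ and track the successor configuration), and your treatment of $d_1$'s participation in $\Sigma_{12}$-interactions is the one consistent with the stated alphabet $\Sigma_{d_1}=\Sigma_2$.
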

\begin{proof}
For a word $w=\sigma_1, \ldots, \sigma_n \in \mathcal{L}(\mathcal{S}_{1+2})$, we consider inductively from the first interaction.
If $\sigma_1$ is enabled in the initial configuration $(l_1,v_1,\ldots, l_n,v_n, \ldots l_m,v_m)$ of $\mathcal{S}_{1+2}$, then according to Definition~\ref{def:semantics}, we have (1) if $\sigma_1$ is in the interaction alphabet of component $c_i \in C_1 \cup C_2$, then there exist a transition $(l_i,g_i,\sigma_1 , f_i, l_i')$ in $c_i$ such that $g_i(v_i)=\true$ and (2) there exists no transition $(l_i,g_i,\sigma' , f_i, l_i')$ in components of $C_1$ and $C_2$ such that $g_i(v_i)=\true$ and $(\sigma_1,\sigma')\in \mathcal{P}_1 \cup \mathcal{P}_2$.

We want to show that $\sigma_1$ is also enabled in the initial configuration of $\mathcal{S}_{1}$. In order to do this, we have to prove (1) components in
$C_1\cup\{d_1\}$ can move with $\sigma_1$ and (2) there exists no transition $(l_i,g_i,\sigma' , f_i, l_i')$ in $C_1\cup \{d_i\}$ such that $g_i(v_i)=\true$, $l_i$ is an initial location, and $(\sigma_1,\sigma')\in \mathcal{P}_1$.
\begin{itemize}
  \item For (1), we consider the following cases:
        (a) If $\sigma_1 \in \Sigma_{12}$, components of $C_1$ can move with $\sigma_1$ and $d_1$ can move with $\sigma_1$ via a self-loop transition.
        (b) If $\sigma_1 \in \Sigma_{1}$, components of $C_1$ can move with $\sigma_1$ and it is not an interaction of $d_1$.
        (c) If $\sigma_1 \in \Sigma_{2}$, it is not an interaction of $C_1$ and $d_1$ can move with $\sigma_1$ via a self-loop transition.
        Therefore, components in $C_1\cup\{d_1\}$ can move with $\sigma_1$.

  \item For (2), first, it is not possible to have such a transition in any component of
        $C_1$ by the definition of $\mathcal{S}_{1+2}$ and Definition~\ref{def:semantics}. Then, if the transition is in $d_i$, we have $\sigma' \in \Sigma_2$ and it follows that $(\sigma,\sigma') \notin \mathcal{P}_1\subseteq \Sigma\times\Sigma_1$.
\end{itemize}

By the above arguments for (1) and (2), $\sigma_1$ is enabled in the initial configuration of $\mathcal{S}_{1}$. By a similar argument, $\sigma_1$ is also enabled in the initial configuration of $\mathcal{S}_{2}$.

The inductive step can be proved using the same argument. Thus $w \in \mathcal{L}(\mathcal{S}_{1})$ and $w \in \mathcal{L}(\mathcal{S}_{2})$.
It follows that $\mathcal{L}(\mathcal{S}_{1+2})\subseteq \mathcal{L}(\mathcal{S}_{1})\cap \mathcal{L}(\mathcal{S}_{2})$.
\end{proof}

\begin{theo}[Completeness]\label{the:ag-complete}
Let $\mathcal{S}_{+} = (C , \Sigma, \mathcal{P}\cup \mathcal{P}_{+})$ be a system and $R$ be the risk specification automaton. If $\mathcal{L}(\mathcal{S}_{+})\cap \mathcal{L}(R) = \emptyset$, then there exists an assumption automaton $A$, system components $C_1$ and $C_2$ such that $C=C_1\cup C_2$, $C_1\cap C_2 =\emptyset$, and two non-conflicting priority rules $\mathcal{P}_{1}\subseteq \Sigma \times \Sigma_1$ and $\mathcal{P}_{2}\subseteq \Sigma \times \Sigma_2$ such that $\mathcal{L}(C_1 \cup \{d_1\}, \Sigma, \mathcal{P}\cup \mathcal{P}_1)\cap \mathcal{L}(R) \cap \mathcal{L}(A)=\emptyset$,  $\mathcal{L}(C_2 \cup \{d_2\}, \Sigma, \mathcal{P}\cup \mathcal{P}_2)\cap \mathcal{L}(\overline{A})=\emptyset$, and $\mathcal{P}_{+}=\mathcal{P}_{1} \cup \mathcal{P}_{2}$.
\end{theo}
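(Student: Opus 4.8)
The plan is to prove completeness constructively, by exhibiting an explicit (indeed degenerate) witness rather than by manipulating an arbitrary decomposition; since the statement existentially quantifies over $C_1$, $C_2$, $A$, $\mathcal{P}_{1}$, $\mathcal{P}_{2}$, we are free to pick the most convenient one. Assuming (as is the intended setting, and otherwise after a routine adjustment of $\Sigma$) that every label of $\Sigma$ occurs in some component of $C$, I would take $C_1 := C$ and $C_2 := \emptyset$. This is admissible: $C = C_1 \cup C_2$, $C_1 \cap C_2 = \emptyset$, and $\mathcal{S}_{2+}$ is still well defined because it always retains the stuttering component $d_2$. With this choice $\Sigma_{12} = \emptyset$, $\Sigma_1 = \Sigma$, $\Sigma_2 = \emptyset$; hence the stuttering gadget $d_1$ has empty alphabet (it never blocks any interaction), while $d_2$ carries self-loops on $\Sigma_1 = \Sigma$. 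Set $\mathcal{P}_{1} := \mathcal{P}_{+}$ and $\mathcal{P}_{2} := \emptyset$, so that $\mathcal{P}_{1} \subseteq \Sigma \times \Sigma_1$, $\mathcal{P}_{2} \subseteq \Sigma \times \Sigma_2$, the two sets are trivially non-conflicting, and $\mathcal{P}_{+} = \mathcal{P}_{1} \cup \mathcal{P}_{2}$. Finally, let $A$ be the one-state automaton accepting $\Sigma^{*}$, so $\mathcal{L}(A) = \Sigma^{*}$ and $\mathcal{L}(\overline{A}) = \emptyset$.

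It then remains to check premises (a) and (b) of the assume-guarantee rule for these witnesses. For (a): because $d_1$ has empty alphabet it contributes nothing to the semantics, so $\mathcal{L}(C_1 \cup \{d_1\}, \Sigma, \mathcal{P} \cup \mathcal{P}_{1}) = \mathcal{L}(C, \Sigma, \mathcal{P} \cup \mathcal{P}_{+}) = \mathcal{L}(\mathcal{S}_{+})$, and premise (a) reduces to $\mathcal{L}(\mathcal{S}_{+}) \cap \mathcal{L}(R) \cap \Sigma^{*} = \mathcal{L}(\mathcal{S}_{+}) \cap \mathcal{L}(R)$, which is empty by hypothesis. For (b): $\mathcal{S}_{2+}$ consists only of $d_2$ (self-looping on all of $\Sigma$) with no priorities, so $\mathcal{L}(\mathcal{S}_{2+}) = \Sigma^{*}$, and premise (b) reduces to $\Sigma^{*} \cap \mathcal{L}(\overline{A}) = \emptyset$, which holds since $\mathcal{L}(\overline{A}) = \emptyset$. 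Together with $\mathcal{P}_{+} = \mathcal{P}_{1} \cup \mathcal{P}_{2}$ this establishes the theorem.

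The points that need care here are bookkeeping rather than substance: one must confirm that the definitions tolerate $C_2 = \emptyset$ (they do, as $\mathcal{S}_{2+}$ still contains $d_2$), pin down the behaviour of the stuttering gadget in the two extreme cases ($d_1$ over the empty alphabet, $d_2$ over $\Sigma$), and note that for $C_1 = C$ the two parameterizations $(\mathcal{P} \cap \Sigma \times \Sigma_1) \cup \mathcal{P}_{1}$ and $\mathcal{P} \cup \mathcal{P}_{1}$ coincide since $\Sigma_1 = \Sigma$. I would also remark that this near-triviality is exactly what one should expect: completeness is cheap precisely because the rule may choose the decomposition, and the real content of the framework is in soundness (Theorem~\ref{the:ag-sound}). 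The genuinely hard variant would be completeness for a \emph{prescribed} balanced decomposition $C = C_1 \cup C_2$: there, splitting $\mathcal{P}_{+}$ into $\mathcal{P}_{1} \subseteq \Sigma \times \Sigma_1$ and $\mathcal{P}_{2} \subseteq \Sigma \times \Sigma_2$ is in general impossible once some priority in $\mathcal{P}_{+}$ places a shared interaction $\sigma \in \Sigma_{12}$ on its higher-priority side — which is precisely why the restriction $\mathcal{P} \subseteq \Sigma \times (\Sigma_1 \cup \Sigma_2)$ is imposed and why the decomposition is left existential; reconciling that restriction with an arbitrary $\mathcal{P}_{+}$ would be the main obstacle in the stronger formulation.
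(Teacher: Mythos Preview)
Your proof is correct and takes exactly the same approach as the paper: the paper's proof is a single sentence choosing $C_1=C$, $C_2=\emptyset$, $A$ recognizing $\Sigma^*$, $\mathcal{P}_{1}=\mathcal{P}_{+}$, and $\mathcal{P}_{2}=\emptyset$. Your write-up simply fills in the bookkeeping that the paper leaves implicit.
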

\begin{proof}
Can be proved by taking $C_1=C$, $C_2=\emptyset$, $A$ as an automaton that recognizes $\Sigma^*$, $\mathcal{P}_{1}=\mathcal{P}_{+}$, and $\mathcal{P}_{2}=\emptyset$.
\end{proof}

Below we give an example that if we allow the priority $\mathcal{P}$ to be any relation between the interactions, then the assume-guarantee rule we used is unsound. The key is that Lemma~\ref{lem:composition} will no longer be valid with the relaxed constraints to the priority.
In Figure~\ref{fig:agboundary}, both $C_1$ and $C_2$ has only one components, $\Sigma_1=\emptyset$, $\Sigma_2=\{c\}$, and $\Sigma_{12}=\{a,b\}$.
Assume that we have the priority rule $\mathcal{P}=\{b\prec a\}$ in $\mathcal{S}_1$, $\mathcal{S}_2$, and $\mathcal{S}$.
Then we get $\mathcal{L}(\mathcal{S}_1)=\{a\}$, $\mathcal{L}(\mathcal{S}_2)=\{b + ca\}$, which implies $\mathcal{L}(\mathcal{S}_1)\cap\mathcal{L}(\mathcal{S}_2)=\emptyset$. However, $\mathcal{L}(\mathcal{S})=\{b\}$. Then we found a counterexample for Lemma~\ref{lem:composition}. This produces a counterexample of the soundness of the assume-guarantee rule.
With a risk specification $\mathcal{L}(R)=\{b\}$, an assumption automaton $\mathcal{L}(A)=\Sigma^*$, and priorities $\mathcal{P}=\mathcal{P}_1=\mathcal{P}_2=\{b \prec a\}$,
the subtasks of the assume-guarantee rule can be proved to be B-Safe. However, the system $\mathcal{S}$ is not B-Safe with respect to $\mathcal{L}(R)$. The reason why $\Sigma_{12}$ can not be placed on the right-hand side of $\mathcal{P}$, $\mathcal{P}_1$, and $\mathcal{P}_2$ is because even in the subsystem a shared interaction can block other interactions successfully, when composing two systems together, it may no longer block other interactions (as now they need to be paired).

\begin{figure}
\centering
 \includegraphics[width=0.4\columnwidth]{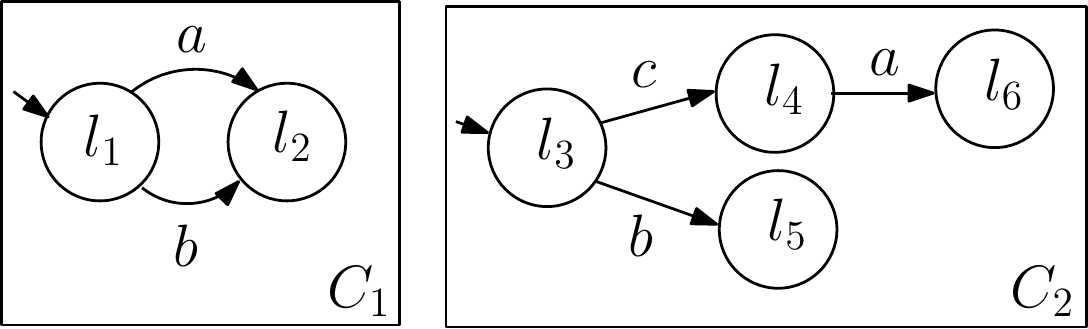}
  \caption{A counterexample when we allow a shared interaction to have higher priority than others.}
 \label{fig:agboundary}
\end{figure}

Notice that (1) the complexity of a synthesis task is NP-complete in the number of states in the risk specification automaton product with the size of the system and (2) $|\mathcal{S}|$ is approximately equals to $|\mathcal{S}_1| \times |\mathcal{S}_2|$\footnote{This is true only if the size of the alphabet is much smaller than the number of reachable configurations.}.
Consider the case that one decomposes the synthesis task of $\mathcal{S}$ with respect to $\mathcal{L}(R)$ into two subtasks using the above assume-guarantee rule. The complexity original synthesis task is NP-complete in $|\mathcal{S}|\times|R|$ and the complexity of the two sub-tasks are $|\mathcal{S}_1|\times|R|\times|A|$ and $|\mathcal{S}_2|\times|A|$\footnote{Since $A$ is deterministic, the sizes of $A$ and its complement $\overline{A}$ are identical.}, respectively.
Therefore, if one managed to find a small assumption automaton $A$ for the assume-guarantee rule, the complexity of synthesis can be greatly reduced. We propose to use the machine learning algorithm L*~\cite{angluin1987learning} to automatically find a small automaton that is suitable for compositional synthesis. Next, we will first briefly describe the L* algorithm and then explain how to use it for compositional synthesis.

The L* algorithm works iteratively to find a minimal deterministic automaton recognizing a target regular language $U$. It assumes a \textit{teacher} that answers two types of queries: (a) \textit{membership queries} on a string $w$,  where the teacher returns \emph{true} if $w$ is in $U$ and \emph{false} otherwise, (b) \textit{equivalence queries} on an automaton $A$, where the teacher returns \emph{true} if $\mathcal{L}(A)=U$, otherwise it returns \emph{false} together with a counterexample string in the difference of $\mathcal{L}(A)$ and $U$.
In the $i$-th iteration of the algorithm, the L* algorithm acquires information of $U$ by posing membership queries and guess a candidate automaton $A_i$. The correctness of the $A_i$ is then verified using an equivalence query. If $A_i$ is not a correct automaton (i.e., $\mathcal{L}(A)\neq U$), the counterexample returned from the teacher will be used to refine the conjecture automaton of the $(i+1)$-th iteration. The learning algorithm is guaranteed to converge to the minimal deterministic finite state automaton of $U$ in a polynomial number of iterations\footnote{In the size of the minimal deterministic finite state automaton of $U$ and the longest counterexample returned from the teacher.}. Also the sizes of conjecture automata increase strictly monotonically with respect to the number of iterations (i.e., $|A_{i+1}|>|A_{i}|$ for all $i>0$).

\begin{figure}[t]
\centering
 \includegraphics[width=0.7\columnwidth]{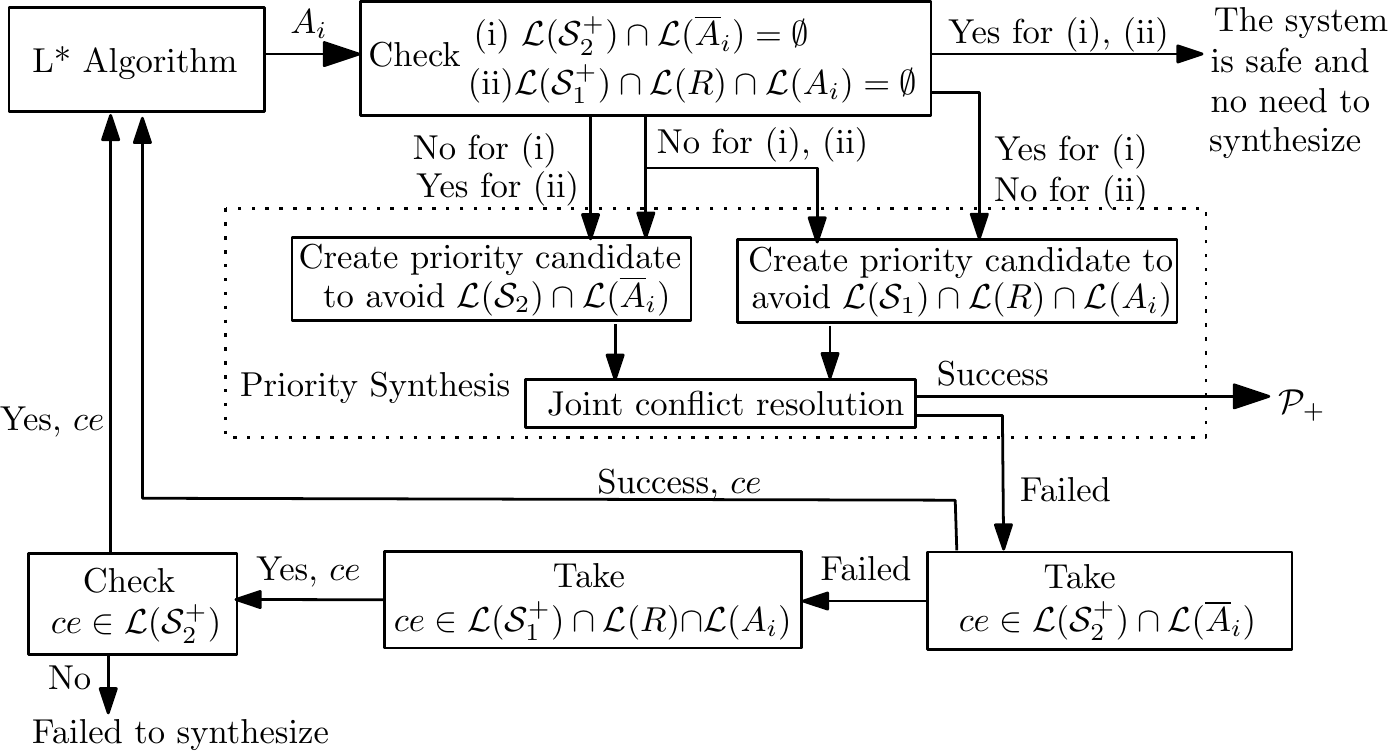}
  \caption{The flow of the assume-guarantee priority synthesis.}
 \label{fig:agflow}
\end{figure}

The flow of our compositional synthesis is in Figure~\ref{fig:agflow}. Our idea of compositional synthesis via learning is the following. We use the notations $\mathcal{S}_{i}^+$ to denote the system $\mathcal{S}_{i}$ equipped with a stuttering component.
First we use L* to learn the language $\mathcal{L}(\mathcal{S}_{2}^+)$. Since the transition system induced from the system $\mathcal{S}_{2}^+$ has finitely many states, one can see that $\mathcal{L}(\mathcal{S}_{2}^+)$ is regular. For a membership query on a word $w$, our algorithm simulates it symbolically on $\mathcal{S}_{2}^+$ to see if it is in $\mathcal{L}(\mathcal{S}_{2}^+)$. Once the L* algorithm poses an equivalence query on a deterministic finite automaton $A_i$, our algorithm tests conditions $\mathcal{L}(\mathcal{S}_{1}^+)\cap \mathcal{L}(R) \cap \mathcal{L}(A_i) = \emptyset$ and $\mathcal{L}(\mathcal{S}_{2}^+)\cap \mathcal{L}(\overline{A_i})=\emptyset$ one after another.
So far, our algorithm looks very similar to the compositional verification algorithm proposed in~\cite{cobleigh2003learning}.
There are a few possible outcomes of the above test
 \begin{enumerate}
   \item Both condition holds and we proved the system is B-Safe with respect to $\mathcal{L}(R)$ and no synthesis is needed.
   \item At least one of the two conditions does not hold. In such case, we try to synthesize priority rules to make the system B-Safe (see the details below).
   \item If the algorithm fails to find usable priority rules, we have two cases:
   \begin{enumerate}
     \item The algorithm obtains a counterexample string $ce$ in $\mathcal{L}(\mathcal{S}_{1}^+)\cap \mathcal{L}(R) \setminus \mathcal{L}(\overline{A_i})$ from the first condition. This case is more complicated. We have to further test if $ce\in \mathcal{L}(\mathcal{S}_{2}^+)$. A negative answer implies that $ce$ is in $\mathcal{L}(A_i) \setminus \mathcal{L}(\mathcal{S}_{2}^+)$. This follows that $ce$ can be used by L* to refine the next conjecture. Otherwise, our algorithm terminates and reports not able to synthesize priority rules.
     \item The algorithm obtains a counterexample string $ce$ in $\mathcal{L}(\mathcal{S}_{2}^+) \setminus \mathcal{L}(A_i)$ from the second condition, in such case, $ce$ can be used by L* to refine the next conjecture.
   \end{enumerate}
 \end{enumerate}

The deterministic finite state automata $R$, $A_i$, and also its complement $\overline{A_i}$ can be treated as components without data and can be easily encoded symbolically using the approach in Section~\ref{subsec.algo.prioritysyn.encoding}. Also the two conditions can be tested using standard symbolic reachability algorithms.

\paragraph{Compositional Synthesis} Recall that our goal is to find a set of suitable priority rules via a small automaton $A_i$. Therefore, before using the $ce$ to refine and obtain the next conjecture $A_{i+1}$, we first attempt to synthesis priority rules using $A_i$ as the assumption automaton.
Synthesis algorithms in previous sections can then be applied separately to the system composed of \{$\mathcal{S}_{1}^+$, $R$, $A_i$\} and the system composed of \{$\mathcal{S}_{2}^+$, $\overline{A_i}$\} to obtain two non-conflicting priority rules $\mathcal{P}_{1i}\subseteq (\Sigma_1\cup\Sigma_{12})\times\Sigma_1$ and $\mathcal{P}_{2i}\subseteq (\Sigma_2\cup\Sigma_{12})\times\Sigma_2$. Then $\mathcal{P}_{1i} \cup \mathcal{P}_{2i}$ is the desired priority for $\mathcal{S}$ to be B-Safe with respect to $R$.
To be more specific, we first compute the CNF formulae $f_1$ and $f_2$ (that encode all possible priority rules that are \emph{local}, i.e., we remove all non-local priority candidates) of the two systems separately using the algorithms in Section~\ref{sec.algo.prioritysyn.repair}, and then check satisfiability of $f_1 \wedge f_2$.
The priority rules $\mathcal{P}_{1i}$ and $\mathcal{P}_{2i}$ can be derived from the satisfying assignment of $f_1 \wedge f_2$.


\section{Evaluation\label{sec.algo.prioritysyn.evaluation}}

\begin{table}[htp]
\center
\begin{threeparttable}
\caption{Experimental results \label{tab:philo}}
\begin{small}
\begin{tabular}{|l|llll|llll|l|}\hline
 & \multicolumn{4}{c|}{Time (seconds)} &\multicolumn{4}{c|}{$\#$ of BDD variables} & \\ \hline
Problem & NFM\tnote{1} & Opt.\tnote{2} &  Ord.\tnote{3} & Abs.\tnote{4} & NFM & Opt. &  Ord. & Abs. & Remark \\ \hline
Phil. 10 &0.813 & 0.303  & 0.291 & 0.169    & 202 & 122 & 122 & 38 & $^1$ Engine based on~\cite{cheng:vissbip:2011} \\
Phil. 20 &-     & 86.646 & 0.755 & 0.166      & -   & 242 & 242 & 38 & $^2$ Dense var. encoding\\
Phil. 25 &-     & -      & 1.407 & 0.183    & -   & -   & 302 & 38 & $^3$ Initial var. ordering\\
Phil. 30 &-     & -      & 3.740  & 0.206    & -   & -   & 362 & 38 & $^4$ Alphabet abstraction\\
Phil. 35 &-     & -      & 5.913 & 0.212    & -   & -   & 422 & 38 & - Timeout/Not evaluated \\
Phil. 40 &-     & -      & 10.210 & 0.228    & -   & -   & 482 & 38 &\\
Phil. 45 &-     & -      & 18.344 & 0.213   & -   & -  & 542 & 38 &\\
Phil. 50 &-     & -      & 30.384 & 0.234   & -   & -  & 602 & 38& \\ \hline\hline
DPU v1  & 5.335   & 0.299  &  x  & x   & 168  & 116  & x & x & $^R$ Priority repushing\\
DPU v2  & 4.174   & 0.537   &  1.134\tnote{R}  & x   & 168  & 116  & 116\tnote{R} & x & x Not evaluated\\ \hline\hline
Traffic  & x  & x   &  0.651  & x   & x  & x & 272 & x &\\ \hline
\end{tabular}
\end{small}
%
\end{threeparttable}
\end{table}

We implemented the presented algorithms (except connection the data abstraction module in D-Finder~\cite{bensalem:dfinder2:2011}) in the \textsc{VissBIP}\footnote{Available for download at \url{http://www6.in.tum.de/~chengch/vissbip}} tool and performed experiments to evaluate them. To observe how our algorithm scales, in Table~\ref{tab:philo} we summarize results of synthesizing priorities for the dining philosophers problem\footnote{Evaluated under Intel 2.93GHz CPU with 2048Mb RAM for JVM.}. Our preliminary result in~\cite{cheng:vissbip:2011} fails to synthesize priorities when the number of philosophers is greater than~$15$ (i.e., a total of $30$ components), while currently we are able to solve problems of $50$ within reasonable time. By analyzing the bottleneck, we found that $50\%$ of the execution time are used to construct clauses for transitive closure, which can be easily parallelized. Also the synthesized result (i) does not starve any philosopher and (ii) ensures that each philosopher only needs to observe his left and right philosopher, making the resulting priority very desirable. Contrarily, it is possible to select a subset of components and ask to synthesize priorities for deadlock freedom using alphabet abstraction. The execution time using alphabet abstraction depends on the number of selected components; in our case we select~$4$ components thus is executed extremely fast. Of course, the synthesized result is not very satisfactory, as it starves certain philosopher. Nevertheless, this is unavoidable when overlooking interactions done by other philosophers.
Except the traditional dining philosophers problem, we have also evaluated on  (i) a BIP model (5 components) for data processing in digital communication (DPU; See Appendix~\ref{appsub.algo.prioritysyn.dpu} for description) (i) a simplified protocol of automatic traffic control (Traffic). Our preliminary evaluation on compositional priority synthesis is in Appendix~\ref{appsub.algo.prioritysyn.compositional.evaluation}.

\section{Related Work\label{sec.algo.prioritysyn.related}}

For deadlock detection, well-known model checking tools such as SPIN~\cite{holzmann:2004:smc} 
and NuSMV~\cite{cimatti1999nns} support deadlock detection by given certain formulas to specify the property.  D-Finder~\cite{bensalem:dfinder2:2011} applies compositional and incremental methods to compute invariants for an over-approximation of reachable states to verify deadlock-freedom automatically. Nevertheless, all the above tools do not provide any deadlock avoidance strategies when real deadlocks are detected.

Synthesizing priorities is subsumed by the framework of controller synthesis proposed by Ramadge and Wohnham~\cite{ramadge1989control}, where the authors proposed an automata-theoretical approach to restrict the behavior of the system (the modeling of environment is also possible).
Essentially, when the environment is modeled, the framework computes the risk attractor and creates a centralized controller.
Similar results using centralized control can be dated back from~\cite{balemi1993supervisory} to the recent work by Autili~et~al~\cite{autili2007synthesis} (the SYNTHESIS tool). Nevertheless, the centralized coordinator forms a major bottleneck for system execution. Transforming a centralized controller to distributed controllers is difficult, as within a centralized controller, the execution of a local interaction of a component might need to consider the configuration of all other components.

Priorities, as they are stateless, can be distributed much easier for performance and concurrency. E.g., the synthesized result of dining philosophers problem indicates that each philosopher only needs to watch his left and right philosophers without considering all others. We can continue with known results from the work of Graf et al.~\cite{GrafPQ10} to distribute priorities, or partition the set of priorities to multiple controllers under layered structure to increase concurrency (see work by Bonakdarpour et al.~\cite{Bonakdarpour2011distribute}). Our algorithm can be viewed as a step forward from centralized controllers to distributed controllers, as architectural constraints (i.e., visibility of other components) can be encoded during the creation of priority candidates. Therefore, we consider the work of Abujarad et al.\cite{abujarad2009parallelizing}  closest to ours, where they proceeds by performing distributed synthesis (known to be undecidable~\cite{PnueliFOCS90})  
directly. In their model, they take into account the environment (which they refer it as faults), and consider handling deadlock states by either adding mechanisms to recover from them or preventing the system to reach it. 
It is difficult to compare two approaches directly, but we give hints concerning performance measure: (i) Our methodology and implementation works on game concept, so the complexity of introducing the environment does not change. (ii) In~\cite{abujarad2009parallelizing}, for a problem of $10^{33}$ states, under $8$-thread parallelization, the total execution time is $3837$~seconds, while resolving the deadlock of the $50$ dining philosophers problem (a problem of $10^{38}$ states) is solved within $31$ seconds 
using our monolithic engine.

Lastly, the research of deadlock detection and mechanisms of deadlock avoidance is an important topic within the community of Petri nets (see survey paper~\cite{li2008survey} for details). Concerning synthesis, some theoretical results are available, e.g.,~\cite{iordache2002synthesis}, but efficient implementation efforts are, to our knowledge, lacking.

\section{Conclusion\label{sec.algo.prioritysyn.conclusion}}

 In this paper, we explain the underlying algorithm for priority synthesis and propose extensions to synthesize priorities for more complex systems. Figure~\ref{fig:algo.prioritysyn.framework} illustrates a potential flow of priority synthesis. A system can be first processed using data abstraction to create models suitable for our analysis framework. Besides the monolithic engine, two complementary techniques are available to further reduce the complexity of problem under analysis. Due to the stateless property and the fact that they preserve deadlock-freedom, priorities can be relatively easily implemented in a distributed setting.

\begin{figure}[t]
\centering
 \includegraphics[width=0.6\columnwidth]{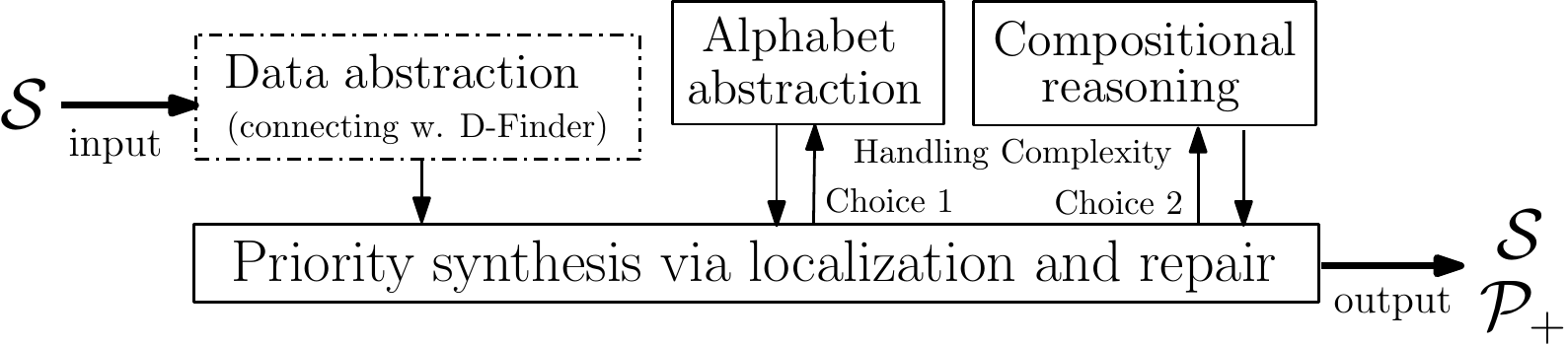}
  \caption{The framework of priority synthesis presented in this paper, where the connection with the D-Finder tool~\cite{bensalem:dfinder2:2011} is left for future work.}
 \label{fig:algo.prioritysyn.framework}
\end{figure}

\appendix
\section{Appendix\label{app.algo.prioritysyn.proofs}}

\subsection{Data Processing Units in Digital Communication\label{appsub.algo.prioritysyn.dpu}}

In digital communication, to increase the reliability of data processing units (DPUs), one common technique is to use multiple data sampling.
We have used \textsc{VissBIP} to model the components and synchronization for a simplified DPU. In the model, two interrupts (\texttt{SynchInt} and \texttt{SerialInt} respectively) are invoked sequentially by a \texttt{Master} to read the data from a \texttt{Sensor}. The \texttt{Master} may miss any of the two interrupts. Therefore, \texttt{SerialInt} records whether the interrupt from \texttt{SynchInt} is lost in the same cycle. If it is missed, \texttt{SerialInt} will assume that the two interrupts have read the same value in the two continuous cycles. According to the values read from the two continuous cycle, \texttt{Master} calculates the result. In case that the interrupt from \texttt{SerialInt} is missing in the second cycle or both interrupts are missing in the first cycle, \texttt{Master} will not calculate anything. Ideally, the calculation result from \texttt{Master} should be the same as what is computed in \texttt{SerialInt}. The mismatch will lead to global deadlocks.

The synthesis of \textsc{VissBIP} focuses on the deadlock-freedom property. First, we have selected the non-optimized engine. \textsc{VissBIP} reports that it fails to generate priority rules to avoid deadlock, in 4.174 seconds with 168 BDD variables. Then we have selected the optimized engine and obtained the same result in 0.537 seconds with 116 BDD variables. The reason of the failure is that two contradictory priority rules are collected in the synthesis. Finally, we have allowed the engine to randomly select a priority between the contradicts (priority-repushing). A successful priority is finally reported in 1.134 seconds to avoid global deadlocks in the DPU case study.

\subsection{Compositional Priority Synthesis: A Preliminary Evaluation\label{appsub.algo.prioritysyn.compositional.evaluation}}

Lastly, we conduct preliminary evaluations on compositional synthesis using dining philosophers problem. Due to our system encoding, when decomposing the philosophers problem to two subproblems of equal size, compare the subproblem to the original problem, the number of BDD variables used in the encoding is only $22.5\%$ less. This is because the saving is only by replacing component construction with the assumption; for interactions, they are all kept in the encoding of the subsystem. Therefore, if the problem size is not big enough, the total execution time for compositional synthesis is not superior than than monolithic method, as the time spent on inappropriate assumptions can be very costly. Still, we envision this methodology more applicable for larger examples, and it should be more applicable when the size of alphabet is small (but with lots of components).

\bibliographystyle{abbrv}

\end{document}